\newtheorem{theorem}{Theorem}[section]
\newtheorem{proposition}[theorem]{Proposition}
\newtheorem{lemma}[theorem]{Lemma}
\newtheorem{remark}[theorem]{Remark}
\theoremstyle{definition}
\newtheorem{definition}[theorem]{Definition}
  \def\01{\{0,1\}}
    \def\01{\{0,1\}}
    \newcommand {\br} [1] {\ensuremath{ \left( #1 \right) }}
    \newcommand {\abs} [1] {\ensuremath{ \left| #1 \right| }}
    \newcommand {\set} [1] {\ensuremath{ \left\lbrace #1 \right\rbrace }}
\begin{document}

\title{Complexity of Eccentricities and All-Pairs Shortest Paths\\in the Quantum CONGEST Model}

\author{
Changsheng Wang \thanks{State Key Laboratory for Novel Software Technology, Nanjing University, email: wangcs@smail.nju.edu.cn}
\and
Xudong Wu \thanks{State Key Laboratory for Novel Software Technology, Nanjing University, email: xdwu@smail.nju.edu.cn}
\and
Penghui Yao \thanks{State Key Laboratory for Novel Software Technology, Nanjing University, email: pyao@nju.edu.cn}
}

\maketitle

\begin{abstract}
Computing the distance parameters of a network, including the diameter, radius, eccentricities and the all-pairs shortest paths (APSP) is a central problem in distributed computing. This paper investigates the distance parameters in the quantum CONGEST models and establishes almost linear lower bounds on eccentricities and APSP, which match the classical upper bounds. Our results imply that there is not quantum speedup for these two problems. In contrast with the diameter and radius, exchanging quantum messages is able to save the communication when the networks have low diameters~\cite{GallM18,magniez2021quantum}. We obtain the lower bounds via a reduction from the two-way quantum communication complexity of the set intersection~\cite{2003QuantumSP}.
\end{abstract}

\newpage


\section{Introduction}

In distributed computing, the graph $G=(V,E)$ represents the topology of the network, each node represents a processor with an unique ID, and messages can be transferred directly through the edges connecting two nodes.
In the classical CONGEST model, the computation proceeds with round-based synchrony and each node can send one $O(\log n)$-bit message to each adjacent node per round, where $n$ denotes the number of nodes.
More specifically, protocols in the model are executed round by round.
In each round, each node can send/receive a message of $O(\log n)$ bits to/from each of its neighbours, i.e., each pair of neighbours can exchange a message of $O(\log n)$ bits.
After that, all nodes implement local computation and enter the next round simultaneously.
As for the quantum version defined in \cite{elkin2014can}, the only difference is that each node can send  qubits instead of classical bits.
The \textit{round complexity} of a distributed problem $f$ in the CONGEST model (both classical and quantum version) denotes the smallest number of communication rounds needed to compute $f$.

The diameter of a graph, denoted by $D$, is defined as the maximum distance between any two nodes.
It is a fundamental parameter of the network since in at least $D$ rounds, every two nodes are possible to know each other.
The eccentricity with respect to some node $v$ is defined as the maximum distances from $v$ to any other node.
So the diameter is the maximum eccentricities over all nodes.
The radius of a graph is defined to be the minimum eccentricities over all nodes.
Diameters, radius and eccentricities are three basic distance parameters in distributed network.

We study the distance computation in the quantum CONGEST model and focus on investigating whether introducing quantum communication can speed up distributed computation compared to the classical setting.
\\[3pt]
\noindent
\textbf{Related works.}
Elkin, Klauck, Nanongkai and Pandurangan \cite{elkin2014can} proved that quantum communication does not offer significant advantages over classical communication for many fundamental problems in distributed computing, including minimum spanning tree (MST), minimum cut, $s$-source distance, shortest path tree (SPT) and shortest $s$-$t$ paths.

However, based on distributed Grover's search~\cite{Grover96}, Le Gall and Magniez~\cite{GallM18} proposed a $\tilde O(\sqrt{nD})$-round quantum algorithm to exactly compute the diameter in the CONGEST model when the graph has a low diameter, where $D$ is the diameter of the underlying network graph, exhibiting a quantum advantage in the CONGEST model, as the classical round complexity is $\Omega\br{n}$.
They also proved a tight lower bound $\tilde\Omega(\sqrt{nD})$ for any distributed quantum algorithms if each node can use only poly$(\log n)$ quantum bits. An unconditional quantum lower bound $\tilde\Omega(\sqrt{n})$ for diameters was also proved in \cite{GallM18}, which was later improved to $\Omega(\sqrt[3]{nD^2}+\sqrt{n})$ by Magniez and Nayak~\cite{magniez2021quantum}.

In the classical CONGEST model, there is a series of studies on distance computation.
Earlier, Frischknecht, Holzer and Wattenhofer~\cite{frischknecht2012networks} showed that computing the diameter of an unweighted graph in CONGEST model requires $\tilde\Omega(n)$ rounds, which is tight up to a logarithmic factors since even the All-Pairs Shortest Path problem(APSP) on an unweighted graph can be resolved in $O(n)$ rounds \cite{peleg2012distributed,holzer2012optimal}.
Abboud, Censor-Hillel and Khoury~\cite{AbboudCK16} later gave a same lower bound of $\tilde\Omega(n)$ on diameters even in sparse networks.
As a result, the round complexities of computing diameter, radius eccentricities and APSP in the classical CONGEST model are all in $\tilde\Theta(n)$ regime.

Regarding the approximate distance computation, an $\tilde\Omega(n)$ lower bound holds even for $(3/2-\epsilon)$-approximate diameter, $(3/2-\epsilon)$-approximate radius, and $(5/3-\epsilon)$-approximate eccentricities in \cite{AbboudCK16}, where $\epsilon>0$ is a constant. For larger approximation ratios,  Holzer , Peleg, Roditty and Wattenhofer~\cite{HolzerPRW14} and Ancona et al.~\cite{ancona2020distributed} present $\tilde O(\sqrt{n}+D)$-round distributed algorithms for $3/2$-approximate diameter, $3/2$-approximate radius, and $5/3$-approximate eccentricities.


\noindent
\textbf{Main results.}
Le Gall and Magniez~\cite{GallM18} showed that 
quantum computation can speed up computing diameter (and radius by similar argument). We focus on the round complexity of eccentricities and APSP , completing the picture on the complexity of distance computation in quantum CONGEST models.
\begin{theorem}
    \label{apsp_bound}
    Any quantum protocol computing all-pairs shortest paths in the CONGEST model requires $\Omega(\frac n{\log n})$ rounds.
\end{theorem}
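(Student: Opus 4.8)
The plan is to reduce from the two--way quantum communication complexity of \emph{set intersection}: Alice holds $x\in\{0,1\}^{N}$, Bob holds $y\in\{0,1\}^{N}$, and they must decide whether $\supp(x)\cap\supp(y)=\emptyset$ (equivalently, output an index in the intersection). By \cite{2003QuantumSP}, every bounded--error two--party quantum protocol for this task must exchange $\Omega(\sqrt{N})$ qubits, even with prior shared entanglement. The bridge to CONGEST is the standard \emph{cut simulation}: if $G$ is an $n$--node graph whose vertices split as $V=V_A\sqcup V_B$ with only $\kappa$ edges crossing, then any $R$--round quantum CONGEST protocol on $G$ can be executed as a two--party protocol in which Alice maintains the quantum registers of the nodes in $V_A$, Bob those in $V_B$, and in each round they exchange precisely the (at most $2\kappa$) messages of $O(\log n)$ qubits that traverse the cut; this costs $O(R\kappa\log n)$ qubits in total. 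Hence $R=\Omega\bigl(\sqrt{N}/(\kappa\log n)\bigr)$, so to reach $\Omega(n/\log n)$ it suffices to encode a set--intersection instance on $N=\Theta(n^{2})$ bits into an $n$--node graph whose cut--width is $\kappa=O(1)$ --- the quadratic blow--up in the instance size is exactly what cancels the square root that quantum communication saves for intersection.

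The core of the argument is therefore the graph construction $G_{x,y}$, which I would build along the template of the classical diameter/APSP lower bounds, reorganized so that the cut stays constant. Concretely, $G_{x,y}$ consists of: (i) an input--independent skeleton containing the $O(1)$ cut edges and a ``backbone'' that fixes a large distance $\ell^{\star}+1$ between two designated nodes $s\in V_A$ and $t\in V_B$; (ii) an $x$--dependent gadget inside $V_A$ and a $y$--dependent gadget inside $V_B$, so that a coordinate $i$ with $x_i=1$ contributes one half of a potential length--$\ell^{\star}$ shortcut for the pair $(s,t)$ while a coordinate $i$ with $y_i=1$ contributes the complementary half; and (iii) a constant--size ``router'' straddling the cut through which every candidate shortcut must pass, designed so that the two halves associated with coordinates $i$ and $i'$ join into a path of length exactly $\ell^{\star}$ if and only if $i=i'$. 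The point is that the coordinate label is transmitted through the router via a polynomially bounded edge weight (or an accumulated path length), not by selecting among cut edges, which is what keeps $\kappa=O(1)$; with this in place, $\mathrm{dist}_{G_{x,y}}(s,t)=\ell^{\star}$ precisely when some coordinate lies in $\supp(x)\cap\supp(y)$, and equals $\ell^{\star}+1$ otherwise. (Replacing ``distance to $t$'' by ``eccentricity of $s$'' yields the companion lower bound for eccentricities.)

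Given such a $G_{x,y}$, the reduction finishes itself: run the hypothetical $R$--round APSP protocol; node $s$ then knows $\mathrm{dist}_{G_{x,y}}(s,t)$, so Alice (who simulates $V_A\ni s$) reads off whether $\supp(x)\cap\supp(y)=\emptyset$. By the cut simulation this is a two--party quantum protocol exchanging $O(R\log n)$ qubits, whence $R\log n=\Omega(\sqrt{N})=\Omega(n)$ and $R=\Omega(n/\log n)$. The main obstacle is entirely in items (ii)--(iii): one must exhibit explicit gadgets that (a) pack $\Theta(n^{2})$ coordinates of the instance into $\Theta(n)$ nodes, (b) funnel all shortcut--relevant $s$--$t$ traffic through a constant number of cut edges, and above all (c) are \emph{sound}, meaning that no mismatched pair of halves (an $x$--half for $i$ combined with a $y$--half for $i'\neq i$) and no path that uses part of the backbone can ever create a spurious $s$--$t$ path of length $\le\ell^{\star}$. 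Establishing (c) is the delicate case analysis; everything else is bookkeeping inside the cut--simulation framework.
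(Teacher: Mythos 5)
Your high-level framework (cut simulation plus a quadratic-size communication instance) is the right one, but the specific plan has a gap that I believe is fatal rather than merely ``delicate.'' You want a single distance value $d(s,t)$, across a cut of $\kappa=O(1)$ edges, to decide disjointness on $N=\Theta(n^2)$ bits. This is information-theoretically impossible: for any fixed $s\in V_A$, $t\in V_B$, the value $d(s,t)$ is a function only of (i) the distances inside $V_A$ from $s$ to the $A$-endpoints of the cut edges, (ii) the distances inside $V_B$ from the $B$-endpoints to $t$, and (iii) the $O(\kappa^2)$ pairwise distances among cut endpoints within each side. With $\kappa=O(1)$ and polynomially bounded lengths this is an $O(\log n)$-bit sketch on each side, so Alice and Bob could compute $d(s,t)$ exactly with $O(\log n)$ classical bits --- far below the $\Omega(\sqrt{N})=\Omega(n)$ quantum lower bound for disjointness. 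Concretely, your item (iii) cannot be realized: since path lengths simply add and minimize, the shortest $s$--$t$ path through the router takes the cheapest $x$-half and the cheapest $y$-half independently; there is no mechanism in shortest-path metrics to force the two halves to carry the \emph{same} coordinate label. So soundness (c) is not a case analysis to be finished, it is the point where the construction must fail.

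The paper's proof sidesteps both problems at once. It allows a cut of size $\Theta(n)$ (the matching edges $(a_i,b_i)$), and it does not compress the answer into one distance: it uses all $\Theta(n^2)$ pairs $(a_i,b_j)$, one per coordinate $p$ of the input, arranging that $d(a_i,b_j)=3$ iff $x_p=y_p=1$ and $2$ otherwise. Since a $\Theta(n)$-size cut costs a factor $n$ in the simulation, the $\Omega(\sqrt{N})=\Omega(n)$ disjointness bound would only give $r=\Omega(1/\log n)$; the paper therefore reduces instead from the intersection-counting function $\mathrm{INT}_{k,\rho}$ with $\rho=\lfloor k/2\rfloor$, for which Razborov's bound gives $\Omega(\sqrt{k\rho})=\Omega(k)=\Omega(n^2)$ qubits. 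That linear-in-$k$ lower bound is exactly what absorbs the $\Theta(n\log n)$ per-round cut cost and yields $r=\Omega(n/\log n)$. If you want to salvage your plan, the fix is not a cleverer router but this change of target function together with reading off the full APSP table rather than a single $s$--$t$ distance.
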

\begin{theorem}
    \label{ecc_bound}
    Any quantum protocol computing the exact eccentricities of all nodes in the CONGEST model requires $\Omega(\frac n{\log^2n})$ rounds.
\end{theorem}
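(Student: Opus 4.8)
The plan is to reduce from the two-way quantum communication complexity of \emph{Set Intersection}: Alice holds $S_A\subseteq[m]$, Bob holds $S_B\subseteq[m]$, and (say) Alice must output the set $S_A\cap S_B$. A crucial point is that we must use this output/relational version and not the decision version (set disjointness): the latter has two-way quantum communication complexity only $\Theta(\sqrt m)$, which would yield a hopelessly weak round bound, whereas outputting $S_A\cap S_B$ requires $\Omega(m)$ qubits --- intuitively because the answer carries $\Omega(m)$ bits of information and quantum communication conveys only a constant number of bits per qubit, even with shared entanglement (superdense coding). We would invoke the $\Omega(m)$ lower bound of~\cite{2003QuantumSP} for this.

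Next I would construct, for each instance $(S_A,S_B)$ with $m=\Theta(n)$, an $n$-vertex graph $G=G(S_A,S_B)$ whose vertex set splits as $V_A\cup V_B$ with $V_A$ determined by $S_A$ and $V_B$ by $S_B$, such that (a) only $O(\log n)$ edges cross the cut, and (b) there are $m$ designated vertices $x_1,\dots,x_m\in V_A$, one per element of $[m]$, with
\[
\operatorname{ecc}_G(x_i)=
\begin{cases}
D+1 & \text{if } i\in S_A\cap S_B,\\
D & \text{if } i\notin S_A\cap S_B,
\end{cases}
\qquad\text{for every } i\in[m],
\]
for a fixed small constant $D$; then the complete list of eccentricities determines $S_A\cap S_B$, and Alice alone can read it off from the $\operatorname{ecc}(x_i)$. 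For (a) I would use the bit-gadget of Abboud--Censor-Hillel--Khoury~\cite{AbboudCK16}: route every cross-cut connection through $O(\log n)$ special vertices that encode binary expansions of indices, so that a left vertex with index $i$ and a right vertex with index $j$ become close through the gadget exactly when $i\ne j$.

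The reduction itself is then routine: given a $T$-round quantum CONGEST protocol computing all eccentricities, Alice simulates the nodes of $V_A$ and Bob those of $V_B$ on $G(S_A,S_B)$, and in each round they exchange only the $O(\log n)$ messages on cut edges, $O(\log n)$ qubits apiece, for a total of $O(T\log^2 n)$ qubits; afterwards Alice knows every $\operatorname{ecc}(x_i)$ and hence $S_A\cap S_B$. Thus $O(T\log^2 n)=\Omega(m)=\Omega(n)$, i.e.\ $T=\Omega(n/\log^2 n)$. (This also explains the extra logarithmic factor relative to the APSP bound of Theorem~\ref{apsp_bound}: the eccentricities of all $n$ nodes carry only $\tilde O(n)$ bits in total, so the embedded instance must have size $N=\tilde\Theta(n)$, and extracting $\Omega(n)$ bits of output through a cut of $c$ edges costs $\Omega(n/(c\log n))$ rounds, which is near-linear only when $c=O(\operatorname{polylog} n)$ --- forcing the sparse construction; APSP can afford a cut of $\Theta(n)$ edges because its output encodes $\Theta(n^2)$ bits.)

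The main obstacle is property (b) --- making the $\pm1$ eccentricity gap robust under the $O(\log n)$-cut constraint. Funnelling all cross-cut paths through $O(\log n)$ bit-vertices pins $\operatorname{diam}(G)$ to $O(1)$, leaving essentially no slack: one must add just enough auxiliary ``padding'' (hubs and shells) that every vertex outside $\{x_i\}$, and every $x_j$ with $j\notin S_A\cap S_B$, lies within distance $D$ of each $x_i$, while for $i\in S_A\cap S_B$ the unique ``mirror'' vertex of $i$ on Bob's side genuinely sits at distance $D+1$ from $x_i$ with no shortcut through the padding pulling it closer. Checking that no spurious long distance appears (which would corrupt an eccentricity) and no spurious short path appears (which would erase the gap) is the delicate case analysis; it adapts the sparse-network diameter/eccentricity gadgets of~\cite{AbboudCK16}, the new ingredient being that the $m$ eccentricities recover the \emph{entire} intersection, which is precisely what allows the reduction from the linear-complexity Set Intersection rather than from the $\sqrt m$-complexity decision problem.
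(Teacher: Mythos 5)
Your proposal follows essentially the same route as the paper: reduce from set intersection --- whose two-way quantum communication complexity is $\Omega(k)$, obtained in the paper by applying Razborov's bound~\cite{2003QuantumSP} to the threshold decision function $\text{INT}_{k,\lfloor k/2\rfloor}$ rather than to the relational ``output the whole intersection'' version you invoke --- embed it in the Abboud--Censor-Hillel--Khoury bit-gadget graph~\cite{AbboudCK16} with only $\Theta(\log n)$ cut edges so that the eccentricities of $k=\Theta(n)$ designated nodes reveal the entire intersection, and conclude $r\log^2 n=\Omega(n)$. The one detail worth correcting in your target property (b): since Alice and Bob can only \emph{add} edges and all cross traffic is funnelled through a single always-present cut edge, the natural gadget creates a shortcut $a_p\to a_3'\to b_3'\to b_p$ exactly when \emph{both} bits are $1$, so intersecting indices get the \emph{smaller} eccentricity ($e(a_p)<6$ versus $e(a_p)=6$ otherwise, witnessed by a pendant node $b_p''$) --- the opposite sign from the one you wrote, though this changes nothing in the reduction.
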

\begin{theorem}
    \label{ecc_approx_bound}
    For all constant $0<\varepsilon<\frac23$, any quantum protocol computing a $(\frac53-\varepsilon)$-approximation of all eccentricities in the CONGEST model requires $\Omega(\frac n{\log^3n})$ rounds.
\end{theorem}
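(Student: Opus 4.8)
The plan is to quantize the classical near-linear lower bound of Abboud, Censor-Hillel and Khoury~\cite{AbboudCK16} for $(\tfrac{5}{3}-\varepsilon)$-approximate eccentricities: we keep the graph-theoretic reduction essentially intact but replace its use of the classical communication complexity of set disjointness by the two-way quantum communication complexity of set intersection~\cite{2003QuantumSP}. The argument has three parts: (a) exhibit a hard graph family that encodes a $\Theta(n^2)$-bit set-intersection instance across a cut of only polylogarithmic bandwidth, with the eccentricity of a designated vertex reading off the answer; (b) simulate a CONGEST protocol across that cut by a two-party quantum protocol; (c) invoke the $\Omega(\sqrt{N})$ quantum communication lower bound. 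This is the most elaborate of the three theorems and refines the approach behind Theorems~\ref{apsp_bound} and~\ref{ecc_bound}.

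\emph{The graph family.} Given a set-intersection instance with inputs $S_A,S_B\subseteq U$, $|U|=N=\Theta(n^2)$, I would adapt the $(\tfrac{5}{3}-\varepsilon)$-eccentricity construction of~\cite{AbboudCK16} to obtain a graph $G=G(S_A,S_B)$ on $O(n)$ vertices together with a vertex-bipartition $V(G)=V_A\cup V_B$ such that: edges inside $V_A$ depend only on $S_A$, edges inside $V_B$ depend only on $S_B$, the edges crossing the cut are fixed and carry $O(\log^3 n)$ bits of CONGEST bandwidth per round (i.e.\ $O(\log^2 n)$ crossing edges), and some vertex $w\in V_A$ has $\mathrm{ecc}_G(w)\le 3$ when $S_A\cap S_B=\emptyset$ and $\mathrm{ecc}_G(w)\ge 5$ when $S_A\cap S_B\neq\emptyset$ (any separation of ratio $\ge\tfrac53$ would do). The crucial design point, inherited from the ``bit-gadget'' of~\cite{AbboudCK16}, is that these two features are in tension and must be reconciled: the $\Theta(n^2)$ input coordinates are stored in edges lying strictly inside $V_A$ and $V_B$ (e.g.\ a bipartite gadget on $\Theta(n)$ selector vertices already holds $\Theta(n^2)$ bits), whereas the ``interface'' that makes a single distance depend on the intersection is funneled through only $O(\log n)$ hub vertices per side, joined across the cut by $O(\log^2 n)$ edges, in place of the $\Theta(n)$-edge matchings of the Frischknecht--Holzer--Wattenhofer diameter graph~\cite{frischknecht2012networks}. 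Since $0<\varepsilon<\tfrac{2}{3}$ makes $1<\tfrac{5}{3}-\varepsilon<\tfrac{5}{3}$, and since a $(\tfrac{5}{3}-\varepsilon)$-approximation returns a value within a factor $\tfrac{5}{3}-\varepsilon$ of $\mathrm{ecc}_G(w)$, its value at $w$ separates $\mathrm{ecc}_G(w)\le3$ from $\mathrm{ecc}_G(w)\ge5$ and hence decides set intersection.

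\emph{Simulation and the communication bound.} Let $\Pi$ be a $T$-round quantum CONGEST protocol that $(\tfrac{5}{3}-\varepsilon)$-approximates all eccentricities. Alice, holding $S_A$, internally runs all vertices of $V_A$; Bob, holding $S_B$, runs all vertices of $V_B$; in each round the two players exchange exactly the qubit messages that $\Pi$ sends along the $O(\log^2 n)$ crossing edges, so each round costs $O(\log^3 n)$ qubits and the whole protocol $O(T\log^3 n)$ qubits (shared randomness or pre-shared entanglement of $\Pi$ is carried over harmlessly). At the end Alice reads the approximate eccentricity off her copy of $w$ and outputs the corresponding intersection answer. By the quantum communication lower bound for set intersection on a universe of size $N$~\cite{2003QuantumSP}, any bounded-error protocol must use $\Omega(\sqrt{N})$ qubits even with entanglement; with $N=\Theta(n^2)$ this is $\Omega(n)$, so $T\log^3 n=\Omega(n)$ and therefore $T=\Omega(n/\log^3 n)$.

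\textbf{Main obstacle.} Parts (b) and (c) are routine. The substance is in part (a): one must check that the approximate-eccentricity construction of~\cite{AbboudCK16} can be made to (i) encode a genuinely \emph{quadratic}-size instance --- this is indispensable, since set intersection has quantum communication complexity $\Theta(\sqrt{N})$ rather than $\Theta(N)$, so a merely $\Theta(n)$-bit instance would yield only $T=\tilde\Omega(\sqrt{n})$ --- while (ii) still admitting a cut of $O(\mathrm{polylog}\,n)$ edges, and (iii) giving the $3$-versus-$5$ \emph{eccentricity} gap (rather than the $2$-versus-$3$ \emph{diameter} gap), with the few boundary vertices consistently placed on one side. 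The exponent $\log^3 n$ in the statement is exactly (number of crossing edges $O(\log^2 n)$) times (CONGEST bandwidth $O(\log n)$ per edge) in the gadget tailored to the $\tfrac{5}{3}$-approximation; a less refined gadget, as used for the exact eccentricities of Theorem~\ref{ecc_bound}, needs only $O(\log n)$ crossing edges and loses one fewer logarithm.
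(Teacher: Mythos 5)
There is a genuine gap, and it sits precisely in the part you defer as ``the main obstacle'': part (a), as you have specified it, cannot be carried out at all. You ask for a graph on $O(n)$ vertices with a cut of only $O(\log^2 n)$ edges in which the eccentricity of a \emph{single} designated vertex $w$ (a $3$-versus-$5$ gap) decides whether a $\Theta(n^2)$-bit instance intersects. But when the cut has a set $C$ of polylogarithmically many endpoints, every shortest path from $w$ to any $v$ decomposes into maximal segments inside $G[V_A]$ and $G[V_B]$ joined by cut edges, so the entire vector $\bigl(d_G(w,v)\bigr)_{v\in V}$ --- and hence $e(w)$ --- is determined by the $O(|C|^2)$ pairwise distance profiles of $\{w\}\cup(C\cap V_A)$ inside $G[V_A]$ together with the analogous profiles of $C\cap V_B$ inside $G[V_B]$. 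Exchanging these profiles costs $O(|C|^2\log n)=O(\mathrm{polylog}(n))$ \emph{classical} bits, so the Boolean function ``$e(w)\le 3$ versus $e(w)\ge 5$'' has polylogarithmic deterministic communication complexity and cannot encode disjointness on more than $\mathrm{polylog}(n)$ coordinates, let alone $\Theta(n^2)$. The tension you flag between your requirements (i), (ii) and (iii) is not an engineering difficulty to be reconciled; in the form you stated it is an impossibility.

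The paper escapes this in a structurally different way. It does not pack the instance into one eccentricity; it exploits the fact that the problem demands \emph{all} eccentricities. The construction has $k=\Theta(n/\log n)$ designated vertices $a_p$, and $e(a_p)$ reveals the single bit $[x_p\wedge y_p]$, so reading off all of them yields $|x\cap y|$. Correspondingly, the reduction is not from disjointness --- whose quantum communication complexity is only $\Theta(\sqrt{k})$ and would lose a square root --- but from the threshold intersection function $\text{INT}_{k,\rho}$ of Definition~\ref{def:int} with $\rho=\lfloor k/2\rfloor$, whose quantum communication complexity is $\Omega(k)$, i.e.\ \emph{linear}, by Razborov's bound (Lemma~\ref{int_bound}). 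With a cut of $O(\log n)$ edges this gives $r=\Omega(k/\log^2 n)=\Omega(n/\log^3 n)$; note the third logarithm comes from $k=\Theta(n/\log n)$ (forced by the bit gadget and the subdivision by $\ell$), not from a $\log^2 n$-edge cut as in your accounting. Your parts (b) and (c) are indeed routine, but without replacing both the single-output-vertex design and the $\Omega(\sqrt{N})$ disjointness bound by this counting reduction, the argument does not close.
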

\noindent
\begin{remark}
The lower bounds on 
eccentricities and APSP 
we establish still hold even if the networks have constant diameters. Thus there is no quantum advantages even for the networks with constant diameters. However, for such networks, it is known that the quantum round complexities of the diameter and radius are quadratically smaller than the classical round complexities in the CONGEST model~\cite{frischknecht2012networks,GallM18,magniez2021quantum}.

\end{remark}
Our lower bounds are tight up to a logarithmic factor. Thus quantum computation cannot speedup the eccentricities and APSP computation substantially.
Our results and related works are collected in Table~\ref{tab:summary}.
\linespread{1.8}
\begin{table}[ht]
    \centering
    \scriptsize
    \begin{tabular}{|c|c|c|c|c|c|}
        \hline
        \multirow{2}*{\textbf{Problem}} & \multirow{2}*{\textbf{Approx.}} & \multicolumn{2}{c|}{\textbf{Upper Bound $\tilde O(\cdot)$}} & \multicolumn{2}{c|}{\textbf{Lower Bound $\tilde\Omega(\cdot)$}} \\
        \cline{3-6}
        ~ & ~ & \textbf{Classical} & \textbf{Quantum} & \textbf{Classical} & \textbf{Quantum} \\
        
        \hline
        APSP & Exact & $n$~\cite{peleg2012distributed,holzer2012optimal} & $n$ & $n$ & $n$ (Theorem~\ref{apsp_bound}) \\
        
        \hline
        \multirow{3}*{Eccentricities} & Exact & \multicolumn{2}{c|}{\multirow{2}*{$n$}} & $n$ & $n$ (Theorem~\ref{ecc_bound}) \\
        \cline{2-2} \cline{5-6}
        ~ & $5/3-\varepsilon$ & \multicolumn{2}{c|}{~} & $n$~\cite{AbboudCK16} & $n$ (Theorem~\ref{ecc_approx_bound}) \\
        \cline{2-6}
        ~ & $5/3$ & $\sqrt{n}+D$~\cite{ancona2020distributed} & & & \\
        
        \hline
        \multirow{3}*{Diameter} & Exact & \multirow{2}*{$n$} & \multirow{2}*{$\sqrt{nD}$~\cite{GallM18}} & $n$~\cite{frischknecht2012networks} & $\sqrt[3]{nD^2}+\sqrt{n}$~\cite{magniez2021quantum} \\
        \cline{2-2} \cline{5-6}
        ~ & $3/2-\varepsilon$ & ~ & ~ & $n$~\cite{AbboudCK16} & \\
        \cline{2-6}
        ~ & $3/2$ & $\sqrt{n}+D$~\cite{HolzerPRW14} & $\sqrt[3]{nD}+D$~\cite{GallM18} & & \\
        
        \hline
        \multirow{3}*{Radius} & Exact & \multirow{2}*{$n$} & \multirow{2}*{$\sqrt{nD}$} & \multirow{2}*{$n$~\cite{AbboudCK16}}& $\sqrt[3]{nD^2}+\sqrt{n}$ (Theorem~\ref{radius})\\
        \cline{2-2} \cline{6-6}
        ~ & $3/2-\varepsilon$ & ~ & ~ & ~ & \\
        \cline{2-6}
        ~ & $3/2$ & $\sqrt{n}+D$~\cite{ancona2020distributed} & & & \\
        
        \hline
    \end{tabular}
    
    \normalsize
    \linespread{1}
    \caption{Upper and Lower bounds of distance computation in the classical and quantum CONGEST model. $\varepsilon>0$ is a constant.}
    \label{tab:summary}
\end{table}
\linespread{1}

\noindent
\textbf{Techniques overview.}
A well-known technique to prove lower bounds on CONGEST models is via two-party communication complexity.
The best lower bounds on the classical round complexity of computing the distances are induced from the two-party communication complexity of the disjointness function $\text{DISJ}_k$.

The lower bounds on the number of rounds needed to compute the diameter/radius exactly and approximately in quantum CONGEST model can be obtained by simply using the the quantum communication complexity of $\text{DISJ}_k$.
However, it seems difficult to obtain tight lower bounds on the round complexities eccentricities and APSP via a reduction from disjointness.

We construct a reduction from the intersection function $ \text{INT}_{k,\rho} $ to APSP, where the $ \text{INT}_{k,\rho} $ is a Boolean function that counts the number of intersecting elements and  disjointness function is a special case of $ \text{INT}_{k,\rho} $ when $\rho=1$.
Razborov~\cite{2003QuantumSP} showed a tight bound on the quantum communication complexity of the $ \text{INT}_{k,\rho} $.
Assuming that there exists a quantum distributed algorithm that computes APSP, we can use the algorithm to construct a protocol that computes the intersection function, which implies a lower bound on APSP.
To the best of our knowledge, this is the first time that the intersection function is used to prove the lower bound on quantum distributed computation.
Analogously, we are able to prove a tight bound of eccentricities via the quantum communication complexity of the intersection function.

\section{Preliminaries}

\subsection{Classical and quantum CONGEST model}

In the classical CONGEST model, the communication network can be seen as a graph $G=(V,E)$.
Usually we assume that there are $n$ nodes and $m$ edges in $G$, and the nodes are assigned with unique identifiers.
Each node represents a processor with unlimited computational power, i.e., the consumption of any local computation in a single processor is ignored.
Each edge connecting two nodes represents a communication channel with $O(\log n)$ bits of bandwidth.

In the quantum CONGEST model, adjacent nodes are allowed to exchange \textit{quantum bits (qubits)}, i.e., the classical channels are replaced by quantum channels with the same bandwidth $O(\log n)$.
And naturally each node can locally do some quantum computation.
Nodes may own qubits which are entangled with the qubits owned by other nodes.
In this paper, we assume that initially distinct nodes do not share entanglement.
But they can, for example, locally create a pair of entangled qubits, and send one to other nodes.

For both classical and quantum CONGEST models,
the algorithm is implemented round by round in a synchronous manner.
In each round, every node sends/receives a message of $O(\log n)$ bits to/from each neighbour, then implements local computation.
At the end of the algorithm, every node has its own output (maybe empty).
The round complexity of an algorithm is defined to be the number of communication rounds in the process %
in the worst case.
And the round complexity of a problem is the least round complexity of any algorithm solving this problem.


\subsection{Graph notation and problem definition}

For a graph $G=(V,E)$, the distance between two nodes $u$ and $v$ is the length of the shortest path between $u$ and $v$.
We use $d(u,v)$ to denote the distance between $u$ and $v$.
The eccentricity with respect to some node $u$ is denoted by
$$e(u)=\max_{v\in V}d(u,v).$$
The diameter of a graph $G$, denoted by 
$D$, is the maximum eccentricities over all nodes, i.e., the maximum distance between any two nodes; while the radius of a graph $G$, denoted by $R$, is the minimum eccentricities over all nodes:
$$D=\max_{u\in V}e(u)=\max_{u,v\in V}d(u,v)\text{ and }R=\min_{u\in V}e(u).$$
Any node with the minimum eccentricity is called a \textit{center} of the graph, so the radius is the eccentricity of any center node.

Before describing problems of distance computation in the distributed setting, recall the definition of approximation for optimization problems.

Given a maximization problem $\mathcal P$ and an instance $x$,  $OPT(x)$ denotes the value of the optimal solution of the instance $x$,  and $SOL_{\mathcal A}(x)$ denotes the solution that the algorithm $\mathcal A$ obtains for instance $x$. For any $\rho\geq 1$, we say that the algorithm $\mathcal A$ computes a $\rho$-(multiplicative) approximation  of $\mathcal P$ if
$\frac1\rho\cdot OPT(x)\le SOL_{\mathcal A}(x)\le OPT(x)$
for any instance $x$.
Similarly for a minimization problem $\mathcal P$ and $\rho\ge1$, we say that an algorithm $\mathcal A$ computes a $\rho$-approximation of $\mathcal P$ if
$OPT(x)\le SOL_{\mathcal A}(x)\le\rho\cdot OPT(x)$
for any instance $x$.

\begin{definition}[Diameter/radius]
    Given a network with underlying graph $G=(V,E)$, the diameter computation, which is a maximization problem, requires that all nodes should have the same output, which is the exact value (or an approximation) of the diameter of $G$.
    Similarly the radius computation, which is a minimization problem, requires nodes to output the same value, which is the exact value (or an approximation) of the radius of $G$.
\end{definition}

\begin{definition}[Eccentricities]
    Given a network with the underlying graph $G=(V,E)$, the eccentricities computation requires each node $u$ to output its eccentricity, or an approximation.
\end{definition}
\noindent
\textbf{Remarks.}
If an algorithm $\mathcal A$ has node $u$ outputs $\hat e(u)$, we say that, for $\rho\ge1$, $\mathcal A$ computes a $\rho$-approximation of the eccentricity if
$$\forall\ u\in V,\ \frac1\rho\cdot e(u)\le \hat e(u)\le e(u).$$

\begin{definition}[APSP]
    Given a network with the underlying graph $G=(V,E)$, the APSP computation requires each node $u$ to output the distance between $u$ and $v$ for each node $v$.
\end{definition}


\subsection{Two-party communication complexity}

\label{sec:cc}

A main tool to obtain lower bounds in the CONGEST model is via reductions to two-way communication complexity and apply the various existing methods proving lower bounds on communication complexity.

Communication complexity was first introduced by Yao in~\cite{yao1979some}. Consider two players, usually called Alice and Bob, and assume that Alice and Bob receives inputs $x, y\in\{0,1\}^k$, respectively. The players want to compute a Boolean function $f:\{0,1\}^k\times\{0,1\}^k\to\{0,1\}$ by exchanging messages. At the end of the protocol, both Alice and Bob output $f(x,y)$.  The minimum communication required for the players to compute the function is the communication complexity of $f$.

In  \cite{yao1993quantum}, Yao introduced the model of quantum communication complexity, where the players are allowed to communicate with qubits. The communication cost of a quantum protocol is the maximum (over all inputs) number of qubits that the players exchange. The quantum communication complexity of a function $f$ is the minimum communication cost of any quantum protocol that computes $f$ with probability at least $2/3$.

\begin{definition}[Disjointness]
    For any integer $k\ge1$, the disjointness function $\text{DISJ}_k:\{0,1\}^k\times\{0,1\}^k\to\{0,1\}$ is the function such that $\text{DISJ}_k(x,y)=0$ if and only if there exists an index $p\in[k]$ with $x_p=y_p=1$, where $[k]=\{1,\cdots,k\}$.
\end{definition}
It is well known that the (randomized) classical communication complexity of the disjointness function is $\Theta(k)$~\cite{10.1016/0304-3975(92)90260-M}, while its quantum communication complexity is $\Theta(\sqrt{k})$~\cite{1238194,2003QuantumSP}. Braverman, Garg, Ko, Mao and Touchette further proved an almost tight lower bound on the bounded-round quantum communication complexity of disjointness.
\begin{lemma}[\cite{braverman2018near}]
    \label{disj_bound}
    The $r$-round quantum communication complexity of disjointness $\textup{DISJ}_k$ is $\Omega\left(\frac k{r\log^8r}+r\right)$.
\end{lemma}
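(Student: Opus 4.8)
The plan is to derive this lower bound within the \emph{quantum information complexity} framework, combining a direct-sum reduction of $\textup{DISJ}_k$ to a single-bit primitive with a round-sensitive information lower bound for that primitive; I would deliberately avoid trying to sharpen Razborov's unbounded-round $\Omega(\sqrt k)$ bound, since the adversary/polynomial method does not track the round budget in the required way. Fix the ``collision-free'' distribution $\nu$ on $\{0,1\}\times\{0,1\}$ supported on $(0,0),(0,1),(1,0)$, and let $\mu$ be the hard distribution on $\textup{DISJ}_k$-instances obtained by drawing every coordinate from $\nu$ and then, with probability $1/2$, planting a single $(1,1)$ in a uniformly random coordinate. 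For a quantum protocol $\Pi$ one defines the quantum information cost $\mathrm{QIC}_\mu(\Pi)$ measuring how much the message registers reveal about the two inputs relative to a reference purification, and one has the routine inequality $\mathrm{QIC}_\mu(\Pi)\le 2\cdot\mathrm{CC}(\Pi)$; so it suffices to lower-bound $\mathrm{QIC}_\mu(\Pi)$ for every $r$-round $\Pi$ computing $\textup{DISJ}_k$ with probability at least $2/3$.

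Next I would run a direct sum. Because $\mu$ is a product across coordinates, a cut-and-paste / average-embedding argument converts $\Pi$ into $k$ single-coordinate $r$-round protocols $\Pi_1,\dots,\Pi_k$: in $\Pi_j$ the two players feed their actual bits into coordinate $j$, privately sample the other $k-1$ coordinates from $\nu$, and run $\Pi$. Each $\Pi_j$ decides the two-bit $\textup{AND}$ on an input drawn from $\nu$ with a planted $(1,1)$ (the planting is essential: under $\nu$ alone $\textup{DISJ}_k\equiv 1$, so the protocol must be forced to distinguish the $(1,1)$ coordinate), and one shows $\mathrm{QIC}_\mu(\Pi)\ge\Omega\!\big(\sum_{j=1}^k\mathrm{QIC}_\nu(\Pi_j)\big)$. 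Thus $\mathrm{CC}(\Pi)=\Omega(k)\cdot\min_{r\text{-round }\Pi'}\mathrm{QIC}_\nu(\Pi')$, and the whole problem reduces to a round-bounded information lower bound for $\textup{AND}$ on two bits.

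The crux is showing $\mathrm{QIC}_\nu(\Pi')=\Omega\!\big(1/(r\log^8 r)\big)$ for every $r$-round protocol $\Pi'$ deciding whether the hidden pair is the planted $(1,1)$ versus $(0,0)$. I would argue by round reduction: if the message sent in some round carries at most $\delta$ bits of information about the inputs, then one-shot quantum message compression (in the style of Jain--Radhakrishnan--Sen, or via the quantum substate theorem) lets the players approximately simulate that round \emph{without any communication}, introducing only a trace-distance error governed by $\delta$ and $\log r$; removing rounds one at a time collapses $\Pi'$ to a $0$-round protocol, which of course cannot decide $(1,1)$ versus $(0,0)$. Forcing the total accumulated simulation error below a constant across all $r$ rounds then forces the average per-round information to be $\Omega\!\big(1/(r\cdot\polylog r)\big)$, and a careful --- though not fully optimized --- accounting of the compression overheads (the internal-information-versus-communication gap, the substate-theorem loss, and the round-by-round iteration) yields the eighth power of $\log r$. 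I expect this step to be the main obstacle: quantum messages cannot be cloned or rewound, so the simulation must rely on lossy one-shot compression whose fidelity error \emph{compounds} across rounds instead of union-bounding cleanly, and controlling that compounding --- rather than just the per-round loss --- is what makes the argument delicate.

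Finally I would assemble the pieces: the previous two steps give $\mathrm{CC}(\Pi)=\Omega\!\big(k/(r\log^8 r)\big)$. The additive $+r$ term is immediate, since we may assume each of the $r$ rounds communicates at least one qubit (an empty round can be deleted), so $\mathrm{CC}(\Pi)\ge r$ as well; taking the larger of the two estimates gives the claimed $\Omega\!\big(k/(r\log^8 r)+r\big)$.
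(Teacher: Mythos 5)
This lemma is imported verbatim from \cite{braverman2018near}; the paper gives no proof of it and uses it purely as a black box in its reductions, so there is no internal argument to compare yours against. That said, your outline does correctly reconstruct the architecture of the cited proof: quantum information cost bounded above by (twice) the communication, a direct-sum decomposition over the $k$ coordinates under a product ``collision-free'' distribution that reduces everything to the two-bit $\mathrm{AND}$, and a round-dependent information lower bound for $\mathrm{AND}$. The additive $+r$ term is indeed immediate.

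The genuine gap is in the step you yourself identify as the crux, and it is not just an omitted computation --- the mechanism you propose there would fail to give the stated bound. Iterated one-shot message compression via the substate theorem, eliminating rounds one at a time until a $0$-round protocol remains, is essentially the Jain--Radhakrishnan--Sen round-elimination argument, and it loses a factor of $r$ in exactly the compounding you flag: to keep the accumulated simulation error below a constant over $r$ eliminations, the per-round information budget must absorb an extra $1/r$ from the compression overhead, which yields $\Omega(k/r^2)$ --- precisely the pre-existing bound that Braverman, Garg, Ko, Mao and Touchette set out to improve. Their actual argument does not collapse the protocol to zero rounds; it lower-bounds the bounded-round quantum information complexity of $\mathrm{AND}$ directly, via a round-by-round quantum analogue of the Hellinger-distance cut-and-paste argument together with a new lower bound on quantum information complexity in terms of the generalized discrepancy method, and the $\log^8 r$ factor is the accumulated loss of those specific steps. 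As written, your proposal asserts the $\log^8 r$ accounting rather than deriving it, and the route you chose for deriving it would, if carried out, land at $\Omega(k/r^2+r)$ rather than $\Omega(k/(r\log^8 r)+r)$.
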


The classical lower bound of the disjointness has been widely applied to obtain numerous tight lower bounds in the classical CONGEST model. However, its quantum lower bound seems to not always have enough capability to capture the hardness in the quantum CONGEST model. We further introduce the intersection function and its quantum communication complexity.
\begin{definition}[Intersection]\label{def:int}
    For an integer $k\ge1$ and $1\le\rho\le k$, the intersection function $\text{INT}_{k,\rho}:\{0,1\}^k\times\{0,1\}^k\to\{0,1\}$ is the function such that $\text{INT}_{k,\rho}(x,y)=0$ if and only if the number of indices $p\in[k]$ with $x_p=y_p=1$ is less than $\rho$.
\end{definition}

\begin{lemma}[\cite{2003QuantumSP}]
    \label{int_bound}
    The quantum communication complexity of the intersection function $\textup{INT}_{k,\rho}$ is $\Omega(\sqrt{k\rho})$ if $\rho\le k/2$, and $\Omega(k-\rho)$ if $\rho>k/2$.
\end{lemma}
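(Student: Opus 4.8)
\noindent\textbf{Proof idea (following Razborov~\cite{2003QuantumSP}).}
The plan is to lower bound the two-way quantum communication complexity of $\text{INT}_{k,\rho}$ directly, through an ``approximate low rank'' argument together with the spectral theory of the Johnson association scheme; this really goes beyond a reduction from $\text{DISJ}_k$, whose quantum complexity is only $\Theta(\sqrt k)$. Write $c$ for the number of qubits exchanged by a protocol computing $\text{INT}_{k,\rho}$ with error at most $1/3$. By the standard decomposition of bounded-error quantum protocols (Kremer's lemma, building on Yao), such a protocol yields complex functions $a_i,b_i\colon\{0,1\}^k\to\C$ for $1\le i\le 2^{O(c)}$ with acceptance probability $P(x,y)=\sum_i a_i(x)b_i(y)$ and $|P(x,y)-\text{INT}_{k,\rho}(x,y)|\le 1/3$ for all $x,y$. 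In particular the real matrix $P$ has rank $2^{O(c)}$ and approximates the communication matrix of $\text{INT}_{k,\rho}$ entrywise.

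The next step is to restrict inputs to a single Hamming weight $w$ (chosen $\approx(2\rho+k)/3$) and symmetrize. The uniform distribution on weight-$w$ pairs is invariant under the diagonal action of $S_k$ on coordinates, so averaging $P$ over this action produces a matrix $\bar P$ of rank at most $2^{O(c)}$ lying in the Bose--Mesner algebra of the Johnson scheme $J(k,w)$. Consequently $\bar P(x,y)$ depends only on the intersection size $t=|x\cap y|$, say $\bar P(x,y)=g(t)$, and $|g(t)-\mathbbm{1}[t\ge\rho]|\le 1/3$ for every achievable $t\in\{\max(0,2w-k),\dots,w\}$.

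Now one converts the rank bound into a degree bound. Expanding $\bar P=\sum_i\lambda_iE_i$ in the primitive idempotents of $J(k,w)$, its rank equals $\sum_{i:\lambda_i\ne0}\dim V_i$ with $\dim V_i=\binom ki-\binom k{i-1}$; since these dimensions grow quickly, $\mathrm{rank}(\bar P)\le 2^{O(c)}$ forces $\lambda_i=0$ for all $i$ larger than some $s=O(c/\log(k/c))$ (and if $s$ would exceed, say, $k/3$, then $c=\Omega(k)$ already and we are done). By the dual eigenvalue relations of the Johnson scheme --- the dual Hahn polynomials, of degree $O(i)$ --- a Bose--Mesner matrix supported on the first $s+1$ eigenspaces corresponds to a \emph{univariate real polynomial} $g(t)$ of degree $O(c)$. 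Hence a short protocol yields a degree-$O(c)$ real polynomial that $\tfrac13$-approximates the step function $t\mapsto\mathbbm{1}[t\ge\rho]$ on the achievable range of $t$. The choice $w\approx(2\rho+k)/3$ places $\rho$ at the midpoint of that range when $\rho\ge k/4$ and inside it otherwise, so the classical approximate-degree lower bounds for threshold functions (Nisan--Szegedy, Paturi) force $\deg(g)$, and therefore $c$, to be $\Omega(\sqrt{k\rho})$ when $\rho\le k/2$ (using $k-\rho=\Theta(\sqrt{k\rho})$ for $\rho\in[k/4,k/2]$) and $\Omega(k-\rho)$ when $\rho>k/2$, which is the claim.

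The main obstacle is making the two middle steps quantitative: controlling the multiplicities $\dim V_i$ and the dual eigenvalues of the Johnson scheme precisely enough that ``low rank'' genuinely implies ``low polynomial degree on the correct range of intersection sizes,'' and choosing the Hamming weight $w$ (together, if convenient, with hardwiring a few coordinates to reduce $\text{INT}_{k,\rho}$ to a smaller intersection instance) so that the threshold approximate-degree bound applies uniformly for all $\rho$ --- in particular near the transition $\rho\approx k/2$ and in the extreme regime $\rho\approx k$, where the achievable intersection sizes are confined to an interval of length only $\Theta(k-\rho)$.
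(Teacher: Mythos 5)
First, a point of reference: the paper does not prove Lemma~\ref{int_bound} at all --- it is imported as a black box from Razborov~\cite{2003QuantumSP} --- so there is no in-paper proof to compare against, and I am judging your outline against Razborov's actual argument.

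Your skeleton is the right one (Kremer/Yao decomposition of the acceptance probabilities, restriction to a fixed Hamming weight, symmetrization into the Bose--Mesner algebra of the Johnson scheme, translation into a univariate polynomial in $t=|x\cap y|$, and a Paturi-type approximate-degree bound), but the load-bearing step is wrong. Averaging $P$ over the diagonal $S_k$-action does \emph{not} preserve the rank bound $2^{O(c)}$: the symmetrization $\bar P=\E_\pi\left[\Pi_\pi P\Pi_\pi^{-1}\right]$ is an average of exponentially many rank-$2^{O(c)}$ matrices, and such an average can have full rank (the symmetrization of the rank-one matrix $e_1e_1^{T}$ is already a multiple of the identity). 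Hence the inference ``$\mathrm{rank}(\bar P)\le 2^{O(c)}$ forces $\lambda_i=0$ for all large $i$'' has no support, and the entire bridge from communication cost to polynomial degree collapses. This is not a cosmetic issue: it is precisely why Razborov (and Sherstov's later pattern-matrix generalization) works with \emph{norms} rather than rank. Razborov bounds the spectral/trace norm of the compressions $E_jPE_j$ of the \emph{unsymmetrized} acceptance matrix onto each isotypic component $V_j$, using the factorization $P(x,y)=\sum_i a_i(x)b_i(y)$ together with explicit spectral estimates for the inclusion matrices between $j$-subsets and $w$-subsets; because norms are convex and permutation-invariant, these bounds survive the averaging and yield quantitative \emph{decay} of the coefficients $\lambda_j$ (exponentially small for $j\gg c$, not exactly zero), which is what actually produces a degree-$O(c)$ approximation of the step function on the achievable range of $t$. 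To repair the sketch you must replace the rank argument by this coefficient-decay (or an approximate-trace-norm) argument. Your endgame --- invoking Paturi's bound and treating the regime $\rho>k/2$ separately because the achievable intersection sizes then lie in an interval of length $\Theta(k-\rho)$ --- is consistent with how Razborov obtains $\Omega(\sqrt{k\rho})$ and $\Omega(k-\rho)$.
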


\section{Warm up}

Based on Grover's search for the maximum eccentricities over all nodes,
Le Gall and Magniez~\cite{GallM18} proposed a distributed quantum algorithm of computing the exact diameter within $ \Tilde{O}(\sqrt{nD}) $ rounds in the CONGEST model, where $ D $ is the diameter of the network.
Similarly, apply Grover's search for the minimum eccentricities over all nodes, we can obtain an $ \Tilde{O}(\sqrt{nD}) $-round distributed algorithm to compute the exact radius of the network. 

The diameter is the maximum eccentricities over all nodes and the radius is the minimum eccentricities over all nodes. These two parameters have many similar properties. Before we introduce the lower bound of radius computation, we should know about how to prove the lower bound of the diameter computation.

Let $\mathcal{G}_d$ be a line network, which consists of $d+1$ nodes, denoted by $A_0,\ldots, A_d$. Here $A_0$ and $A_d$ are the two end points and $A_1,\ldots, A_{d-1}$ are the intermediate nodes. Each edge of $ \mathcal{G}_d $ is a quantum channel of bandwidth $B$ qubits.
The nodes $A_0$ and $A_d$ receive $k$-bit inputs $x, y\in \{0,1\}^k $, respectively. The disjointness on a line $L_{k,d}$ is a computational problem over the network $ \mathcal{G}_d $ that requires to compute the $\text{DISJ}_k(x,y)$. We say an algorithm solves $L_{k,d}$ with probability $ p $, if  there exists a node outputs the right answer with probability at least $p$.
The following lemma relates the round complexity of $L_{k,d}$ to the two-party communication complexity of $\text{DISJ}_k(x,y)$.

\begin{lemma}[\cite{GallM18}]
	\label{simulationarg}
	Let $ d $ and $ r $ be any positive integers. If there exists an $ r $-round quantum distributed algorithm, in which each intermediate node uses at most $ s $ qubits of memory, that computes function $L_{k,d}$ over $ {\mathcal{G}_d} $ with probability $ p $, then there exists a $ O(r/d) $-round two-party quantum protocol computing $\text{DISJ}_k$ with probability $ p $ using $ O(r(B + s)) $ qubits of communication, where bandwidth $B=\Theta(\log n)$ in the CONGEST model.
\end{lemma}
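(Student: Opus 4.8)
The plan is to turn a hypothetical $r$-round distributed algorithm for $L_{k,d}$ into a two-party protocol for $\text{DISJ}_k$ by having Alice and Bob jointly simulate it. Alice will play the roles of nodes $A_0,\dots,A_c$ and Bob the roles of $A_{c+1},\dots,A_d$, for $c=\lfloor d/2\rfloor$; the point is that $A_0$ (which holds $x$) stays on Alice's side and $A_d$ (which holds $y$) on Bob's side, so neither player ever has to transmit the other's $k$-bit input, and the only edge crossing the $\text{Alice}/\text{Bob}$ boundary is $(A_c,A_{c+1})$, carrying just $B$ qubits per round. Since the simulation applies exactly the same local unitaries and measurements as the distributed algorithm and merely keeps track of which qubits are physically on which side, it reproduces the output distribution faithfully: at the end, the player on whose side the output-producing node sits transmits the one output bit, so both players output the correct value with probability exactly $p$.

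Running the simulation round-by-round would cost $r$ rounds of two-party communication, which is too many. To bring this down to $O(r/d)$ rounds one processes the $r$ rounds in $\Theta(r/d)$ epochs of $\Theta(d)$ consecutive rounds and exploits the locality of the line: information travels at most one node per round. Within an epoch, one of the players takes its own portion of the line together with a ``buffer'' of $\Theta(d)$ extra consecutive intermediate nodes reaching past the boundary, and simulates for $\Theta(d)$ rounds without any communication --- any error caused by the data missing at the far end of the buffer propagates one node per round and so never corrupts the player's own portion within the epoch. Between epochs the two players resynchronize by transferring the $\Theta(d)$ intermediate-node registers in the buffer region together with the $\Theta(d)$ boundary messages of the epoch ($O(B)$ qubits each). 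Each transferred register belongs to an intermediate node, hence uses at most $s$ qubits; so each epoch costs $O(1)$ rounds and $O(d(s+B))$ qubits, for a total of $O(r/d)$ rounds and $O(r(s+B))$ qubits, with $B=\Theta(\log n)$ in the CONGEST model.

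The step I expect to be the main obstacle is making this epoch scheme go through \emph{quantumly}. The ``buffer'' picture is essentially the classical argument, where the buffer region is simply copied and kept by both players; quantum no-cloning rules this out, so the synchronization has to be realized by \emph{physically relocating} the relevant quantum registers --- sending the actual qubits --- rather than copying descriptions of them. One then has to verify that (i) throughout the protocol each node's register sits on exactly one side and the global state remains a single well-defined pure state with all nodes at the same simulated round; (ii) relocating a register preserves its entanglement with the rest of the system (which it does, since sending qubits over a quantum channel is lossless); and (iii) only bounded-memory intermediate-node registers --- never $A_0$ or $A_d$ --- are ever relocated, which is what keeps the $k$-bit inputs out of the communication budget. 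It also helps that during the first $\Theta(d)$ rounds the central stretch of the line evolves independently of both $x$ and $y$, so that the first synchronization can be prepared cheaply. The careful accounting of how the buffer registers move between the players while keeping the entanglement across the boundary edge correct is the technical heart of the argument.
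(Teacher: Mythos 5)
First, note that the paper does not prove Lemma~\ref{simulationarg} at all: it is imported verbatim from~\cite{GallM18}, so there is no in-paper proof to compare against, and your attempt has to be judged against the known simulation argument of Le Gall--Magniez (building on Elkin et al.). Your high-level accounting is right and matches that argument: split the line at the midpoint so that $A_0$ and $A_d$ (and hence the $k$-bit inputs) never cross the cut, batch $\Theta(d)$ simulated rounds per two-party message, ship only intermediate-node registers of size $s$ plus $\Theta(d)$ boundary messages of size $B$ per batch, giving $O(r/d)$ rounds and $O(r(s+B))$ qubits. You also correctly flag no-cloning as the reason the classical ``both players keep a copy of the buffer'' trick fails.

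The gap is precisely the step you yourself defer as ``the technical heart'': the epoch mechanism you sketch does not work quantumly as described. If a player simulates $\Theta(d)$ rounds on its portion plus a one-sided buffer while the far end of the buffer is missing its incoming messages, the buffer registers get updated by unitaries acting on wrong (or absent) inputs; their states are then simply incorrect, and ``resynchronizing by transferring the buffer registers'' cannot repair them --- unlike the classical argument, there is no pristine copy to fall back on, and meanwhile the other player's correctly-simulated region has shrunk by $\Theta(d)$ nodes, leaving a middle band of $\Theta(d)$ registers that nobody holds correctly. The actual proof avoids ever applying a gate with a missing input: within an epoch the active player advances node $A_{c+\Delta-j}$ only up to simulated round $t+j$ and then \emph{freezes} it (a staircase), which is always possible because each frozen register's entire backward light cone lies on that player's side; the single batch then consists of these frozen registers together with their pending incoming messages, and the receiving player completes their evolution in order. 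Equivalently, the ownership boundary sweeps monotonically across the middle $\Theta(d)$ intermediate nodes during each epoch and reverses direction in the next. Without this (or an equivalent) mechanism, correctness of the simulated output distribution --- the claim that the protocol succeeds with probability exactly $p$ --- is not established.
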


Le Gall and Magniez~\cite{GallM18} showed that any quantum distributed algorithm that computing the diameter of the network requires $ \Tilde{\Omega}(\sqrt{nD/s}) $ rounds if each node uses at most $s$ qubits of memory where $D$ is the diameter of the network via a reduction from disjointness on a line $L_{k,d}$

In~\cite{AbboudCK16}, the authors also exhibited a  reduction from disjointness to diameter computation \cite{AbboudCK16}. Thus Lemma~\ref{disj_bound} implies that computing the diameter need $ \Tilde{\Omega}(\sqrt{n}) $ rounds without memory restriction.

Magniez and Nayak~\cite{magniez2021quantum} proved a new  lower bound of the disjointness on a line (lemma \ref{setDisjonaline}) and established a new lower bound $ \Tilde{\Omega}(\sqrt[3]{nD^2}+\sqrt{n})$ for the diameter in the CONGEST model without memory restriction.

\begin{lemma}[\cite{magniez2021quantum}]
   \label{setDisjonaline}
    Any quantum communication protocol with error probability at most 1/3 for the disjointness on the line $L_{k,d}$ requires $ {\Omega}(\sqrt[3]{kD^2/B}) $ rounds, where $B$ is the the bandwidth of each communication channel.
\end{lemma}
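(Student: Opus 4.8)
The plan is to reduce the line problem to the \emph{bounded‑round} two‑party quantum communication complexity of disjointness, i.e.\ to Lemma~\ref{disj_bound}. Given an $r$‑round quantum protocol $\Pi$ that solves $L_{k,d}$ over $\mathcal{G}_d$ with error at most $1/3$, I will build a two‑party protocol for $\textup{DISJ}_k$ from it; the whole game is to make that protocol use far fewer than $r$ rounds while keeping its communication $O(rB)$. A single cut of the line between two adjacent nodes already yields a two‑party protocol (Alice simulates the left part, Bob the right part, and the two $B$‑qubit messages crossing the cut are relayed each round), but it uses $r$ rounds and $O(rB)$ qubits, so Lemma~\ref{disj_bound} only gives $r=\tilde{\Omega}(\sqrt{k/B})$. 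To recover the $D$‑dependent term one must exploit that on a line information propagates at unit speed, and ``batch'' many rounds of $\Pi$ into one round of the two‑party protocol.

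Concretely I would cut $\mathcal{G}_d$ at the midpoint $m=\floor{d/2}$, let Alice own $A_0,\dots,A_m$ (she holds $x$) and Bob own $A_{m+1},\dots,A_d$ (he holds $y$), and split the $r$ rounds of $\Pi$ into $\lceil r/\ell\rceil$ epochs of $\ell\le m$ rounds. At the start of an epoch Alice sends Bob the joint quantum state of the $\ell$ nodes $A_{m-\ell+1},\dots,A_m$ nearest the cut on her side, and Bob symmetrically sends the state of $A_{m+1},\dots,A_{m+\ell}$. Because no message traverses more than $\ell$ edges in $\ell$ rounds, the light cone of each party's own region during the epoch is contained in that region together with the $\ell$ borrowed nodes, so after this exchange each party can faithfully simulate its own nodes for the whole epoch with no further communication. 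This produces a two‑party protocol with $O(r/\ell)$ rounds whose communication is $O(r/\ell)$ times the cost of transmitting the state of $\ell$ boundary nodes. Taking $\ell=\Theta(d)$ and feeding the resulting $(\text{rounds},\text{communication})$ pair into Lemma~\ref{disj_bound}, together with the trivial $\Omega(d)$ bound (any protocol on $\mathcal{G}_d$ needs $d$ rounds for the inputs to interact at all), is designed to give the claimed $\Omega(\sqrt[3]{kD^2/B})$ after optimizing $\ell$.

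The hard part is precisely the per‑epoch cost. Transmitting the state of $\ell$ boundary nodes is cheap, $O(\ell B)=O(dB)$ qubits and hence $O(rB)$ in total, only when each node's memory is $O(\log n)=O(B)$ qubits — an assumption this lemma does \emph{not} make. Removing the memory restriction is the crux of~\cite{magniez2021quantum}: rather than retransmitting a potentially enormous boundary state at the start of every epoch, one must argue that the only ``fresh'' information entering the boundary region is the $O(B)$ qubits of the true cross‑cut message per round, and turn this into an incremental/hybrid simulation — or run a direct quantum information‑cost accounting along the line, in the spirit of the proof of Lemma~\ref{disj_bound} — so that the effective two‑party communication stays $O(rB)$ even though the round count shrinks by a $\Theta(d)$ factor. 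Once that information‑theoretic step is in place, the lemma follows by substitution into Lemma~\ref{disj_bound} and a short optimization; verifying that step carefully, and that it composes correctly with the midpoint cut and the light‑cone bookkeeping, is where essentially all of the work lies.
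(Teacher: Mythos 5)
This lemma is quoted from~\cite{magniez2021quantum}; the paper gives no proof of it, so there is nothing internal to compare against. Judged on its own terms, your sketch correctly identifies the general strategy (cut the line at the midpoint, batch the $r$ rounds into $O(r/d)$ epochs using the unit-speed light cone, and feed the resulting two-party protocol into the bounded-round disjointness bound of Lemma~\ref{disj_bound}), and you are right that the entire difficulty is that the nodes' memories are unbounded, so one cannot simply ship the boundary state at the start of each epoch. But you have deferred exactly that step, so this is a gap, not a proof.

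Moreover, the quantitative target you set for the deferred step is not the right one, and this matters. You aim to show that the simulation has $O(r/d)$ rounds \emph{and} total communication $O(rB)$. Plugging the pair $\bigl(r' = O(r/d),\ C = O(rB)\bigr)$ into Lemma~\ref{disj_bound} gives $rB = \tilde\Omega(kd/r)$, i.e.\ $r = \tilde\Omega(\sqrt{kd/B})$ --- which for $d \le k/B$ is \emph{strictly stronger} than the claimed $\Omega(\sqrt[3]{kd^2/B})$ and is not known to hold unconditionally; it is essentially the conditional bound of Lemma~\ref{simulationarg}, which requires the memory restriction. No choice of the epoch length $\ell$ rescues this: with $C = O(rB)$ the optimization over $\ell$ never produces a cube root. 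What the argument of~\cite{magniez2021quantum} actually establishes is the weaker accounting $C = O(r^2 B/d)$ for an $O(r/d)$-round simulation --- each epoch costs roughly the \emph{entire accumulated} cross-cut communication $O(rB)$, not $O(dB)$ --- and it is precisely $r^2B/d = \tilde\Omega(kd/r)$ that yields $r = \tilde\Omega(\sqrt[3]{kd^2/B})$. So the missing ``information-theoretic step'' is not a verification detail: as you have parameterized it, it would prove too much, and the correct version changes the arithmetic of the final bound. (The additive $\Omega(d)$ term you mention is also not part of this lemma's statement and is not needed for the cube-root term.)
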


Based on the approach of proving the lower bound of computing the diameter, we can directly obtain the the following theorem.

\begin{theorem}
    \label{radius}
	The number of rounds needed for any quantum protocol to compute the radius of a sparse network in the CONGEST model is $ \Tilde{\Omega}(\sqrt{n}+\sqrt[3]{n D^2}) $, where $ D $ denotes the diameter of the network. If each node uses at most $ s $ qubits of memory, computing the radius of the network requires $ \tilde{\Omega}(\sqrt{nD/s}) $ rounds.
\end{theorem}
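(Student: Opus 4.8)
\textbf{Proof proposal for Theorem~\ref{radius}.}
The plan is to mimic the diameter lower bound of Le Gall--Magniez and Magniez--Nayak, but applied to the radius. The key is to design a graph family $H_{x,y}$, built around a line network $\G_d$, in which the radius depends on whether $\text{DISJ}_k(x,y)=0$ or $1$, and where the only communication bottleneck across the line is the $O(\log n)$-bandwidth of its edges. Concretely, I would attach to the endpoint $A_0$ a gadget encoding Alice's input $x$ and to $A_d$ a gadget encoding Bob's input $y$ (following the standard construction used for diameter, e.g.\ the sparse one in \cite{AbboudCK16}), so that there is a pair of ``special'' vertices whose distance passes through the line; arranging the gadgets so that the eccentricity of a designated center vertex (say a vertex near the middle of the line, or one in Alice's gadget) is $d+c$ when the sets intersect and $d+c'$ with $c'\ne c$ when they are disjoint. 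The number of nodes is $n=\Theta(k+d)$ and the diameter is $D=\Theta(d)$; choosing $k=\Theta(n)$ and $d$ a growing parameter gives the desired dependence.

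First I would state and prove the reduction lemma: an $r$-round quantum algorithm computing the radius of $H_{x,y}$ over $\G_d$ yields, via the simulation argument, an $O(r/d)$-round two-party quantum protocol for $\text{DISJ}_k$ with $O(r(B+s))$ qubits of communication in the memory-restricted case, exactly as in Lemma~\ref{simulationarg}, and an $r/d$-depth line protocol in general. Second, combining with Lemma~\ref{disj_bound} (the $\Omega(k/(r\log^8 r)+r)$ bounded-round bound), setting $k=\Theta(n)$, $d=D$, and optimizing over the round count $r'=O(r/d)$ of the induced line protocol with bandwidth $B=\Theta(\log n)$, yields $r=\tilde\Omega(\sqrt[3]{nD^2}+\sqrt n)$; alternatively, plugging the graph directly into Lemma~\ref{setDisjonaline} gives the $\sqrt[3]{nD^2}$ term and the unconditional $\sqrt n$ term comes from the non-line reduction of \cite{AbboudCK16} adapted to radius. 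Third, for the memory-restricted statement, I would instead combine Lemma~\ref{simulationarg} with the $\Omega(\sqrt k)$ quantum communication lower bound for $\text{DISJ}_k$: an $r$-round, $s$-qubit-memory radius algorithm gives an $O(r(B+s))$-qubit protocol for $\text{DISJ}_k$, hence $r(B+s)=\Omega(\sqrt k)$, i.e.\ $r=\tilde\Omega(\sqrt{nD/s})$ after rescaling $k\approx nD$ (the extra factor of $D$ enters because the construction uses $\Theta(k)$ parallel paths bundled through the line so that $n=\Theta(kD)$, or equivalently by padding).

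The main obstacle is the graph gadget itself: unlike the diameter, the radius is a $\min$ over eccentricities, so I must ensure that \emph{no} vertex anywhere in $H_{x,y}$ has eccentricity smaller than the intended center regardless of $x,y$ --- in particular the input-encoding gadgets must be made ``balanced'' so that vertices deep inside Alice's or Bob's gadget do not accidentally become low-eccentricity centers, and the gap between the intersecting and disjoint cases must survive this $\min$. This is exactly the kind of care that \cite{AbboudCK16} already carried out for their radius lower bound, so I would reuse their construction essentially verbatim, only rechecking that it is compatible with a line backbone of length $d=\Theta(D)$ (so the simulation argument applies) and that it remains sparse, $m=O(n)$. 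The remaining steps --- the simulation argument and the communication-complexity plug-ins --- are routine given Lemmas~\ref{simulationarg}, \ref{disj_bound}, and \ref{setDisjonaline}.
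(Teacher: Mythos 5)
Your overall strategy matches the paper's: reuse the radius gadget of \cite{AbboudCK16} to reduce $\text{DISJ}$ to radius computation, get the $\tilde\Omega(\sqrt n)$ term from the bounded-round disjointness bound (Lemma~\ref{disj_bound}) applied to the direct two-party simulation, and get the $\tilde\Omega(\sqrt[3]{nD^2})$ term by stretching the cut edges into paths of length $d=\Theta(D)$ and invoking Lemma~\ref{setDisjonaline}. Those two parts are essentially the paper's argument and are fine.

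There is, however, a genuine gap in your memory-restricted case. You propose to combine Lemma~\ref{simulationarg} with the \emph{unbounded-round} quantum lower bound $\Omega(\sqrt k)$ for $\text{DISJ}_k$, obtaining $r(B+s)=\Omega(\sqrt k)$ and then recovering the factor $D$ by ``rescaling $k\approx nD$.'' This rescaling is not available: the reduction graph has $\Theta(k)$ nodes for an input of length $k$, so necessarily $k=O(n)$, and with $k=\Theta(n)$ your inequality only yields $r=\tilde\Omega(\sqrt n/s)$, with no dependence on $D$ at all. Bundling $\Theta(k)$ parallel paths of length $D$ through the line would force $n=\Theta(kD)$, i.e.\ $k=\Theta(n/D)$, which makes the bound \emph{weaker}, not stronger. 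The factor $D$ must instead come from the round reduction itself: Lemma~\ref{simulationarg} produces an $O(r/d)$-round protocol, and you must apply the \emph{round-dependent} tradeoff of Lemma~\ref{disj_bound} to that protocol, giving $r(B+s)=\tilde\Omega\br{\frac{n}{r/d}+r/d}$, hence $r^2=\tilde\Omega(nD/(B+s))$ and $r=\tilde\Omega(\sqrt{nD/s})$. This is exactly how the paper closes this case; replacing the bounded-round bound by the plain $\Omega(\sqrt k)$ bound loses the $D$ and cannot be patched by changing $k$.
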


\begin{proof}[Proof]

First of all, there exists a graph $G$ reducing $\text{DISJ}_n(x,y)$ to radius computation \cite{AbboudCK16}, where the graph $G$ has $ \Theta(n) $ nodes. The graph $G$ has two parts, one is simulated by Alice (denoted by $G_a$) and the other is simulated by Bob (denoted by $G_b$), which has $\Theta(\log n)$ edges connecting $G_a$ and $G_b$.
 Assume that there exists a $r$-round quantum distributed algorithm $\mathcal{A}$ that computes the radius, we can use the algorithm to construct a protocol that computes the $ \text{DISJ}_n(x,y)$, where $x\in\{0,1\}^n$ and $y\in\{0,1\}^n$ are Alice's input and Bob's input, respectively. 
 
 The protocol works as follows. Alice and Bob can jointly simulate the graph $G$: Alice simulates
 $G_a$(which depends on $x$), while Bob simulates
 $G_b$(which depends on $y$). To simulate the $r$-round quantum distributed algorithm $\mathcal{A}$, Alice and Bob need to exchange messages corresponding to the communication occurring along the $\Theta(\log n)$ edges between $G_a$ and $G_b$. Because there are $\Theta(\log n)$ channels(edges) and the bandwidth of each channel is $O(\log n)$, one round of communication in algorithm $\mathcal{A}$ can be replaced with two messages of $O(\log^2 n)$ qubits. Finally, Alice and Bob can compute the radius of $G$, thus compute $ \text{DISJ}_n(x,y)$.
 From lemma \ref{disj_bound}, we conclude that $r\log^2 n=\Tilde{\Omega}(\frac{n}{r}+r) $, which implies $ r=\Tilde{\Omega}(\sqrt{n}) $.
 
Replacing each edges connecting $G_a$ and $G_b$ with a path of length $d$, we obtain an instance for radius computation, which is reduced from $L_{n,d}$ with $d=\Theta(D)$. 

We consider the following two cases.

\begin{itemize}
    \item {\em There is no upper bound on the memory of each node.}   Assume that there exists an $r$-round quantum distributed algorithm that computes the radius, we can obtain an $r$-round protocol for $L_{k,d}$. By Lemma~ \ref{setDisjonaline}, we  obtain an lower bound $\Tilde{\Omega}(\sqrt[3]{nD^2}) $ as desired.

    \item {\em Each node uses at most $s$ qubits.}  By Lemma~\ref{simulationarg}, there exists an $ O(r/d) $-round quantum communication protocol for  $ \text{DISJ}_n(x,y)$ with the communication cost  $O(r(\log^2 n+s))$. By Lemma~\ref{disj_bound}, $ r(\log^2 n+s)=\tilde{\Omega}(\frac{n}{r/d}+r/d)$, which implies $ r=\tilde{\Omega}(\sqrt{nD/s}) $.
\end{itemize}
\end{proof}

\section{Lower bounds}

We briefly outline the reductions from two-party communication complexity to the computation in the CONGEST model.
And thus the lower bound on communication complexity implies the lower bound in the CONGEST model.

Given a distributed problem $\mathcal P$, we briefly describe the reduction from communication complexity to $\mathcal P$.
For a Boolean function $f:\{0,1\}^k\times\{0,1\}^k\to\{0,1\}$, let $x, y\in\{0,1\}^k$ be the input of $f$.
The reduction is via a construction of instance $G=(V,E)$ of $\mathcal P$, which consists of three steps shown in Figure~\ref{fig:construction}.
\begin{figure}
    \begin{mdframed}
        \begin{enumerate}
            \item To construct an instance $G=(V,E)$, the vertex set $V$ are partitioned into two disjoint sets $V_a$ and $V_b$;
            
            \item Edge set $E$ includes several edges that are independent of $x$ and $y$,
            such as edges between $V_a$ and $V_b$;
            
            \item A set of edges between nodes in $V_a$ (resp.~$V_b$) is created according to $x$ (resp.~$y$), denoted by $E_a$ (resp.~$E_b$).
        \end{enumerate}
    \end{mdframed}
    \caption{Construction of reduction graph}
    \label{fig:construction}
\end{figure}

If $f(x,y)$ can be inferred from the solution of $\mathcal P$ on $G$, then the lower bound on the complexity of $\mathcal P$ can be obtained from the following argument.
1) Assume that there is a $r$-round distributed algorithm $\mathcal A$ solving $\mathcal P$;
2) the player Alice (resp.~Bob) knows $x$ (resp.~$y$) and simulate the execution of $\mathcal A$ on $G$ for each node in $V_a$ (resp.~$V_b$);
3) Alice and Bob obtain the solution of $\mathcal P$ on $G$ and compute $f(x,y)$.
The communication between Alice and Bob is exactly the same with the communication between nodes in $V_a$ and $V_b$ during running $\mathcal A$ on $G$.
Suppose there are $s$ edges between nodes in $V_a$ and $V_b$.
Recall that edges have bandwidth $O(\log n)$ in CONGEST model.
The communication complexity of $f$ is $O(rs\cdot\log n)$, which should be beyond any lower bound of the communication complexity of $f$.


\subsection{APSP}

We follow the above framework to prove a nearly tight lower bound on the round complexity of APSP in the quantum CONGEST model, i.e., to prove theorem~\ref{apsp_bound} via a reduction from the intersection functions in two-party communication setting.

Let $n\ge5$ be an integer.
Set $s=\lfloor(n-1)/2\rfloor$, $k=s(s-1)/2$ and $\rho\leq k$. Alice and Bob are 
given an instance $\br{x,y}\in\set{0,1}^k\times\set{0,1}^k$ of the intersection function. Alice and Bob who receive binary string $x,y$ of length $k$ respectively, are supposed to compute $\text{INT}_{k,\rho}\br{x,y}$. We establish the reduction by first 
constructing an instance $G=(V,E)$ of APSP. Let $\abs{V}=n$.
In the first step of Figure~\ref{fig:construction},  partition  $V=V_a\uplus V_b$ where $V_a=\{a_0,a_1,a_2,\cdots,a_s\}$ and $V_b=\{b_0,b_1,b_2,\cdots,b_s\}$ if $n$ is even, or $V_b=\{b_1,b_2,\cdots,b_s\}$ if $n$ is odd.
Next, if $n$ is even, we include all $(a_0,a_i)$, $(b_0,b_i)$ and $(a_i,b_i)$ in $E$ for $i\in[s]$. If $n$ is odd, we include all $(a_0,a_i)$ and $(a_i,b_i)$ in $E$.
The rest of edges are constructed according to the inputs $x,y\in\{0,1\}^k$. Notice that  $k=s(s-1)/2$. 
There are $k$ pairs of distinct indices in $[s]$.
Each pair $(i,j)$ with $1\le i<j\le s$ is associated with an unique index $p\in[k]$, and the existence of the edge $(a_i,a_j)$ (resp.~$(b_i,b_j)$) is determined by $x_p$ (resp.~$y_p$).
More precisely, suppose these pairs $(i,j)$ are sorted in lexicographic order, and $(i_p,j_p)$ denotes the $p^\text{th}$ pair.
For $p\in[k]$, if $x_p=0$, Alice adds an edge between $a_{i_p}$ and $a_{j_p}$ into $E$.
If $y_p=0$, Bob adds an edge $(b_{i_p},b_{j_p})$.
Formally, we have
$$
E=\bigcup_{i\in[s]}\{(a_i,b_i),(a_0,a_i),(b_0,b_i)\}
\cup\bigcup_{p\in[k]:x_p=0}\{(a_{i_p},a_{j_p})\}
\cup\bigcup_{p\in[k]:y_p=0}\{(b_{i_p},b_{j_p})\}.
$$
An example for $n=8$ (and thus $k=3$), 
$x=(0,1,0)$ and $y=(1,1,0)$
can be found in figure \ref{apsp_example}.
\begin{figure}[ht]
	\centering
	\includegraphics[width=0.8\linewidth]{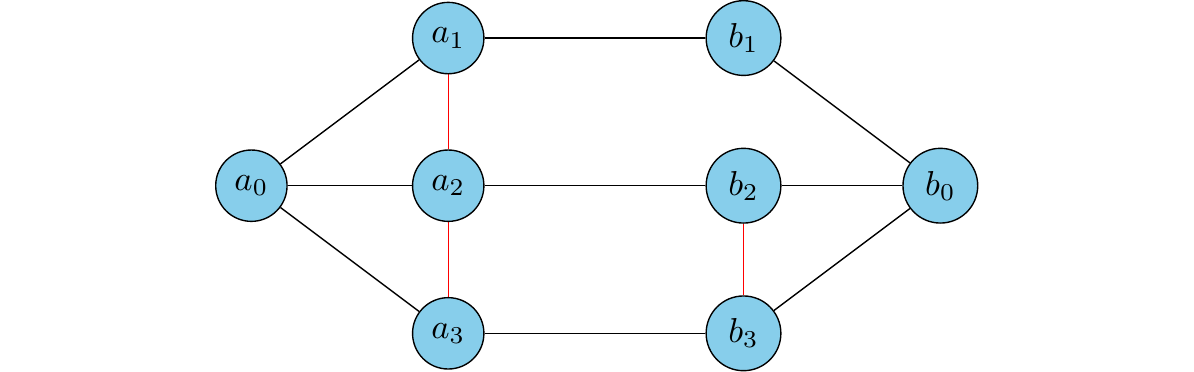}
	\caption{An example of reduction graph (APSP)}
	\label{apsp_example}
\end{figure}

\begin{proposition}
    For $1\le i<j\le s$, suppose $(i,j)$ is the $p^\text{th}$ pair in lexicographic order, i.e., $(i_p,j_p)=(i,j)$.
    If $x_p=y_p=1$, $d(a_i,b_j)=3$, otherwise $d(a_i,b_j)=2$.
\end{proposition}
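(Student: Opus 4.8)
The plan is to analyze the graph structure carefully and establish both the upper and lower bounds on $d(a_i, b_j)$ by exhibiting concrete short paths and ruling out shorter ones. First I would record the relevant adjacencies: regardless of inputs, the edge $(a_i, b_i)$ is always present for every $i \in [s]$, and $(a_0, a_i)$ is always present (and $(b_0, b_i)$ when $n$ is even). The edge $(a_i, a_j)$ is present iff $x_p = 0$, and $(b_i, b_j)$ is present iff $y_p = 0$, where $(i,j)$ is the $p^{\text{th}}$ pair.

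For the upper bound $d(a_i, b_j) \le 3$: I would always have the path $a_i \to a_0 \to a_j \to b_j$ using the two star edges at $a_0$ and the always-present edge $(a_j, b_j)$; this shows $d(a_i, b_j) \le 3$ unconditionally. For the case $x_p = 0$ or $y_p = 0$, I would exhibit a path of length $2$: if $x_p = 0$ then $(a_i, a_j) \in E$, so $a_i \to a_j \to b_j$ works; if $y_p = 0$ then $(b_i, b_j) \in E$, so $a_i \to b_i \to b_j$ works. Hence $d(a_i, b_j) \le 2$ whenever $x_p = 0$ or $y_p = 0$.

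For the lower bounds I need to rule out shorter paths. First, $d(a_i, b_j) \ge 2$ always, since $a_i$ and $b_j$ are non-adjacent: the only edges incident to $a_i$ go to $a_0$, to $b_i$, and to various $a_\ell$, none of which is $b_j$ (here I use $i \ne j$, so $b_i \ne b_j$). Second, and this is the crux: when $x_p = y_p = 1$, I must show $d(a_i, b_j) \ge 3$, i.e., there is no length-$2$ path. A length-$2$ path $a_i \to w \to b_j$ requires $w$ adjacent to both $a_i$ and $b_j$. The neighbors of $a_i$ are among $\{a_0, b_i\} \cup \{a_\ell : (a_i,a_\ell) \in E\}$, and since $x_p = 1$ the edge $(a_i, a_j)$ is absent; the neighbors of $b_j$ are among $\{b_0, a_j\} \cup \{b_\ell : (b_j, b_\ell) \in E\}$, and since $y_p = 1$ the edge $(b_i, b_j)$ is absent. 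I would check the candidate intermediate vertices: $a_0$ is not adjacent to $b_j$; $b_0$ is not adjacent to $a_i$; $a_j$ is adjacent to $b_j$ but adjacency to $a_i$ would require $(a_i,a_j) \in E$, excluded; $b_i$ is adjacent to $a_i$ but adjacency to $b_j$ would require $(b_i,b_j) \in E$, excluded; any other $a_\ell$ with $\ell \notin \{0, i, j\}$ is not adjacent to $b_j$ (its only cross edge is $(a_\ell, b_\ell)$); symmetrically any other $b_\ell$ is not adjacent to $a_i$. So no common neighbor exists, giving $d(a_i, b_j) \ge 3$, which combined with the upper bound yields $d(a_i,b_j) = 3$.

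The main obstacle is the exhaustive case analysis in the last step — making sure every possible intermediate vertex $w$ is covered and that the bipartite-like structure of cross edges (only the $(a_\ell, b_\ell)$ edges cross between the $a$-side and the $b$-side) is used correctly to eliminate all candidates. Once the adjacency lists are written out explicitly this is routine, but it must be done with care to avoid missing a case such as $w = a_0$ or $w = b_0$.
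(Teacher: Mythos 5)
Your proof is correct and follows essentially the same route as the paper's: the length-$3$ path $a_i\to a_0\to a_j\to b_j$ for the upper bound, the common neighbours $a_j$ (when $x_p=0$) or $b_i$ (when $y_p=0$) for the length-$2$ case, and the absence of any common neighbour when $x_p=y_p=1$. The only difference is that you spell out the exhaustive check of candidate common neighbours (and the non-adjacency of $a_i$ and $b_j$), which the paper simply asserts; your added detail is accurate.
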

\begin{proof}
    We use the graphical notation $u\to v$ to denote an edge $(u,v)$ traversed in a path.
    There is always a path of the form $(a_i\to a_0\to a_j\to b_j)$, so $d(a_i,b_j)\le3$.
    If $x_p=y_p=1$, $a_i$ and $b_j$ share no common neighbour, and thus $d(a_i,b_j)=3$.
    Otherwise they share a common neighbour $a_j$ if $x_p=0$, or $b_i$ if $y_p=0$, and thus $d(a_i,b_j)=2$.
\end{proof}

So the number of $(i,j)$ such that $1\le i<j\le s$ and $d(a_i,b_j)=3$ is the same with the number of $p\in[k]$ with $x_p=y_p=1$, or the size of the intersection of $x$ and $y$.
$|x\cap y|$ can be inferred from the solution of APSP, and so does $\text{INT}_{k,\rho}(x,y)$ for any $\rho$.
We are now ready to prove the quantum lower bound of APSP.
\begin{proof}[Proof of Theorem~\ref{apsp_bound}]
    Let $\mathcal A$ be an $r$-round quantum protocol which computes APSP.
    For $n\ge5$, we set parameters $s=\lfloor(n-1)/2\rfloor$ and $k=s(s-1)/2$.
    Alice and Bob, respectively receiving $x$ and $y$, compute $\text{INT}_{k,\rho}(x,y)$ by exchanging quantum messages.
    They construct the instance $G=(V_a\uplus V_b,E)$ described above.
    We will show how Alice simulates $\mathcal A$ on the nodes of $V_a$. Bob does the same for $V_b$.
    
    At the $t^\text{th}$ round, for each message sent to a node in $V_b$ by some node in $V_a$ while executing the $t^\text{th}$ round of $\mathcal A$, Alice sends the same message to Bob along with the ID of the receiver.
    For each message received from Bob, Alice ``virtually" allocate it to the corresponding node.
    Communication inside $V_a$ and local computation on nodes of $V_a$ can be done without communicating with Bob.
    
    After simulating $\mathcal A$, Alice knows $d(a_i,b_j)$ for $i,j\in[s]$.
    She counts the number of $(i,j)$ such that $1\le i<j\le s$ and $d(a_i,b_j)=3$ to obtain $|x\cap y|$ and decide $\text{INT}_{k,\rho}(x,y)$ for any $\rho$.
    The quantum communication complexity between Alice and Bob is basically the same with the number of qubits passing over edges between $V_a$ and $V_b$ while simulating $\mathcal A$.
    The set of edges between $V_a$ and $V_b$ is $\{(a_i,b_i):i\in[s]\}$, which is of size $s$.
    Recall that $\mathcal A$ is of $r$ rounds and edges have bandwidth $O(\log n)$ qubits.
    For any $\rho$, the quantum communication complexity of $\text{INT}_{k,\rho}$ is $O(rs\cdot\log n)$. By setting $\rho=\lfloor k/2\rfloor$, Lemma~\ref{int_bound} gives a lower bound $\Omega(k)$.
    Therefore, $r=\Omega(\frac k{s\cdot\log n})=\Omega(\frac n{\log n})$ as $s=\Theta(n)$ and $k=\Theta(n^2)$.
\end{proof}


\subsection{Eccentricities}

To prove the quantum lower bounds of computing the exact and approximate eccentricities shown in Theorem~\ref{ecc_bound} and Theorem~\ref{ecc_approx_bound}, we again establish a reduction from the intersection function, and construct an underlying network graph following the idea of \cite{AbboudCK16}.

\subsubsection{Exact eccentricities}

Let $n\ge23$ be an integer, we construct an instance $G=(V,E)$ of $n$ nodes.
In the first step of construction in Figure~\ref{fig:construction}, we set $k$ to be the maximum integer satisfying $3k+4(\lfloor\log_2(k-1)\rfloor+1)+6\le n$, and set $s=\lfloor\log_2(k-1)\rfloor+1$, $n'=3k+4s+6$, $V=V_a\uplus V_b$ where
\begin{align*}
V_a= & \bigcup_{p\in[k]}\{a_p\}\cup\bigcup_{i\in[s]}\{a^0_i,a^1_i\}\cup\{a'_1,a'_2,a'_3\}\cup\bigcup_{i\in[n-n']}\{a''_i\}, \\
V_b= & \bigcup_{p\in[k]}\{b_p\}\cup\bigcup_{i\in[s]}\{b^0_i,b^1_i\}\cup\{b'_1,b'_2,b'_3\}\cup\bigcup_{p\in[k]}\{b''_p\}.
\end{align*}
We include some additional nodes $a''_1,\cdots,a''_{n-n'}$ in $V$ so that $|V|=n$.

For convenience, let $B(p,i)$ denote the $i^\text{th}$ bit in binary expression of integer $p-1$.
We describe the edges independent of $x$ and $y$.
For each $p\in[k]$, edges between $a_p$ and nodes in $\biguplus_{i\in[s]}\{a^0_i,a^1_i\}$ are added according to binary expression of integer $p-1$, i.e., $a_p$ is connected to $a^{B(p,i)}_i$ for each $i\in[s]$.
Moreover, $a_p$ and $a'_1$, $a'_1$ and $a'_2$, $a'_2$ and $a'_3$ are all connected.
Edges between nodes of $V_b$ are linked in a similar way except that $b_p$ is connected to $b''_p$ for each $p\in[k]$.
In addition, $a'_1$ is connected to $a''_i$ for each $i\in[n-n']$.
As for edges between $V_a$ and $V_b$, an edge linking $a'_3$ and $b'_3$ are added.
$a^0_i$ and $b^1_i$, $a^1_i$ and $b^0_i$ are connected for each $i\in[s]$.

Given inputs $x$ and $y$, Alice and Bob construct the rest of edges.
For $p\in[k]$, Alice adds an edge $(a_p,a'_3)$ if $x_p=1$, and Bob adds an edge $(b_p,b'_3)$ if $y_p=1$.

Formally, we have $E=E_a\uplus E_b\uplus E_c$ where
\begin{align*}
    E_a= & \bigcup_{p\in[k]:x_p=1}\{(a_p,a'_3)\}
           \cup\bigcup_{p\in[k],i\in[s]}\{(a_p,a^{B(p,i)}_i)\} \\
         & \cup\bigcup_{p\in[k]}\{(a_p,a'_1)\}
           \cup\{(a'_1,a'_2),(a'_2,a'_3)\}
           \cup \bigcup_{i\in[n-n']}\{(a'_1,a''_i)\}, \\
    E_b= & \bigcup_{p\in[k]:y_p=1}\{(b_p,b'_3)\}
           \cup\bigcup_{p\in[k],i\in[s]}\{(b_p,b^{B(p,i)}_i)\} \\
         & \cup\bigcup_{p\in[k]}\{(b_p,b'_1)\}
           \cup\{(b'_1,b'_2),(b'_2,b'_3)\}
           \cup\bigcup_{p\in[k]}\{(b_p,b''_p)\}, \\
    E_c= & \bigcup_{i\in[s]}\{(a^0_i,b^1_i),(a^1_i,b^0_i)\}
           \cup\{(a'_3,b'_3)\}.
\end{align*}
See also Figure~\ref{ecc_graph} for better understanding.
The set of red lines is an example of edges added according to $x$ and $y$.
\begin{figure}[ht]
	\centering
	\includegraphics[width=0.5\linewidth]{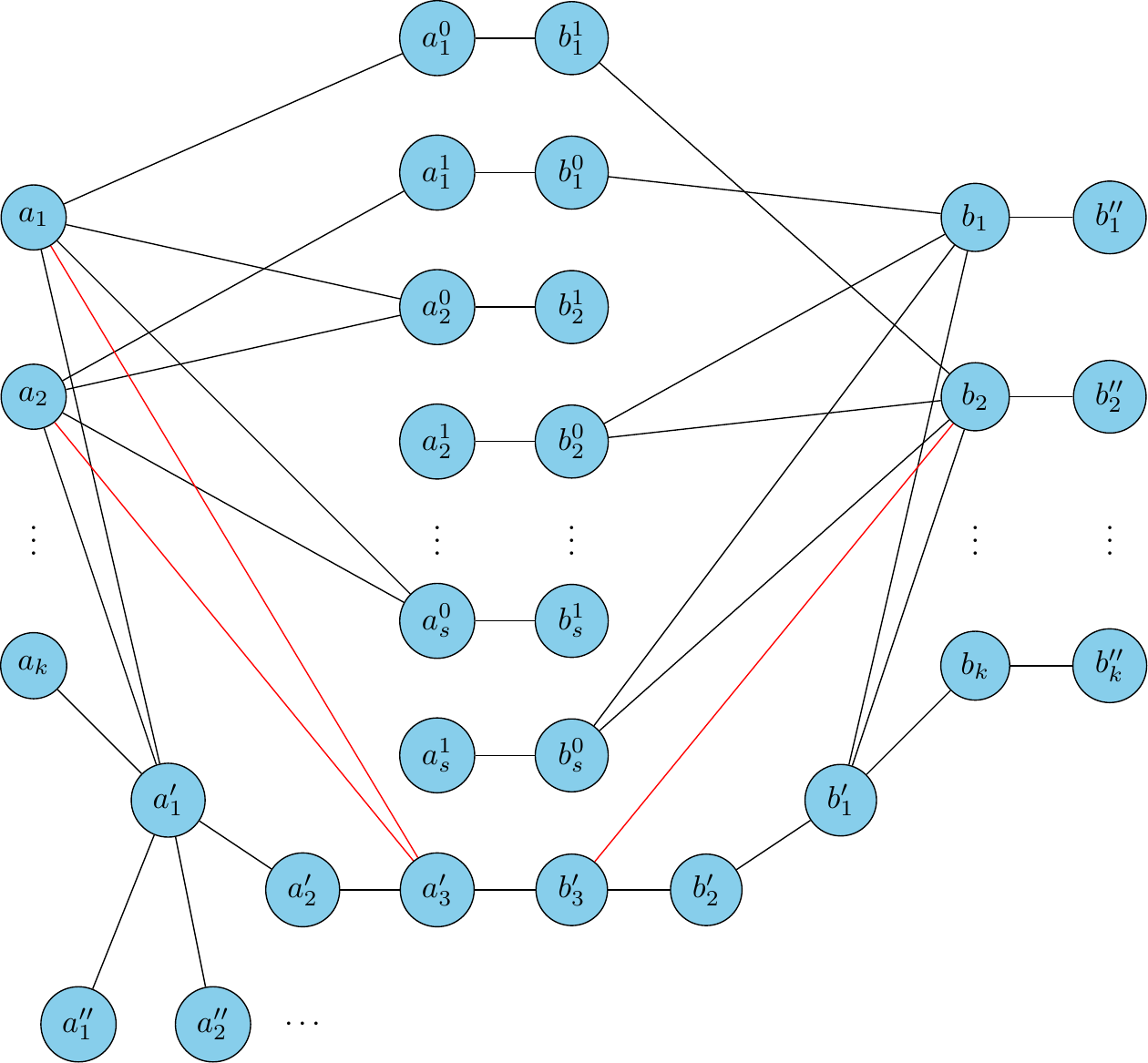}
	\caption{Reduction graph (eccentricity)}
	\label{ecc_graph}
\end{figure}

\begin{proposition}
    \label{ecc_prop1}
    For $p\in[k]$, $d(a_p,b''_p)\le6$ and $d(a_p,v)\le5$ for $v\in[V]\setminus\{b''_p\}$.
\end{proposition}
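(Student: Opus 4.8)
The plan is to prove both bounds by exhibiting explicit short paths, organized as a breadth-first layering around $a_p$. First I would record the local structure: the neighbours of $a_p$ are $a'_1$, the $s$ bit-gadget nodes $a^{B(p,i)}_i$ for $i\in[s]$, and additionally $a'_3$ precisely when $x_p=1$. A key preliminary observation is that for every bit position $i\in[s]$ both values $0$ and $1$ occur among $\{B(q,i):q\in[k]\}$: since $n\ge23$ forces $k\ge3$ (as $3k+4s+6\le n$ with $s=\lfloor\log_2(k-1)\rfloor+1$), and $\{0,\dots,k-1\}$ already contains $0$ and $2^{i-1}$ for every such $i$. This lets me reach every $a_q$ in two hops (all $a_q$ are adjacent to $a'_1$) and the ``opposite'' bit node $a^{1-B(p,i)}_i$ in three hops, via $a_p\to a'_1\to a_r\to a^{1-B(p,i)}_i$ for a suitable $r$ with $B(r,i)=1-B(p,i)$.

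Next I would verify the bound class by class. Everything in $V_a$ lies within distance $3$ of $a_p$: the nodes $a_q$ and $a''_i$ are within $2$ via $a'_1$, the bit nodes $a^c_i$ within $3$ by the above, and $a'_1,a'_2,a'_3$ within $3$ along $a_p\to a'_1\to a'_2\to a'_3$ (with $a'_3$ at distance $1$ if $x_p=1$). On the $V_b$ side the crucial shortcut is the cross gadget $E_c$: the path $a_p\to a^{B(p,i)}_i\to b^{1-B(p,i)}_i$ reaches one bit node of each pair in two steps, so appending $b^{1-B(p,i)}_i\to b_q$ gives $d(a_p,b_q)\le3$ whenever $q\ne p$, taking $i$ a bit position in which $p-1$ and $q-1$ differ. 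From this, $b'_1$ (adjacent to some $b_q$ with $q\ne p$), $b'_3$ (via $a'_3\to b'_3$), the complementary bit nodes $b^{B(p,i)}_i$ (via $a^{1-B(p,i)}_i\to b^{B(p,i)}_i$), and the pendants $b''_q$ with $q\ne p$ are all within distance $4$, while $b'_2$ and $b_p$ are within distance $5$; for $b_p$ one uses $a_p\to a'_1\to a_r\to a^{1-B(p,i)}_i\to b^{B(p,i)}_i\to b_p$ with $B(r,i)=1-B(p,i)$.

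Finally, the single exceptional vertex is $b''_p$: in $E$ its only incident edge is $(b_p,b''_p)$, so it is a pendant and $d(a_p,b''_p)=1+d(a_p,b_p)\le6$. I expect the only mildly delicate point to be the bookkeeping that $b''_p$ is genuinely the unique vertex at distance exceeding $5$ — that is, checking that $b_p$ itself (and hence every other $b$-pendant) is still reached in at most $5$ steps even though, unlike $b_q$ for $q\ne p$, no single bit disagreement lets one cross the $E_c$ gadget in three steps; the remedy is exactly to first step from $a_p$ to some $a_r$ with $r\ne p$ through $a'_1$ and cross the gadget from $a_r$. The rest is a routine enumeration over the vertex classes of $V_a\uplus V_b$.
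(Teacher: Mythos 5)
Your proposal is correct and follows essentially the same strategy as the paper: exhibit explicit short paths class by class, using the key observation that every bit position $i\in[s]$ realizes both values among $\{B(q,i):q\in[k]\}$ so that every bit-gadget node is reachable, and isolating $b''_p$ as the unique pendant forcing distance $6$. The only cosmetic difference is your route to $b_p$ (crossing the bit gadget from an auxiliary $a_r$ rather than the paper's path through $b_q\to b'_1\to b_p$), which yields the same bound of $5$.
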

\begin{proof}
    There are paths $(a_p\to a'_1\to a'_2\to a'_3\to b'_3\to b'_2)$, $(a_p\to a'_1\to a''_i)$ for each $i\in[n-n']$ and $(a_p\to a'_1\to a_q)$ for each $q\in[k]\setminus\{p\}$. Hence,
    $$\forall\ v\in\bigcup_{q\in[k]}\{a_q\}\cup\{a'_1,a'_2,a'_3,b'_3,b'_2\}\cup\bigcup_{i\in[n-n']}\{a''_i\},\ d(a_p,v)\le5.$$
    Since $s=\lfloor\log_2(k-1)\rfloor+1$, for $i\in[s]$, one can always find integers $x,y\in\{0,\cdots,k-1\}$ such that the $i^\text{th}$ bit in binary expression of $x$ ($y$) is $0$ ($1$). And thus indices $q^0=x+1\in[k]$, $q^1=y+1\in[k]$ satisfying $B(q^0,i)=0$ and $B(q^1,i)=1$.
    There are paths $(a_p\to a'_1\to a_{q^0}\to a^0_i\to b^1_i)$ and $(a_p\to a'_1\to a_{q^1}\to a^1_i\to b^0_i)$. Hence,
    $$\forall\ v\in\bigcup_{i\in[s]}\{a^0_i,a^1_i,b^0_i,b^1_i\},\ d(a_p,v)\le4.$$
    For $q\in[k]\setminus\{p\}$, one can always find an index $i\in[s]$ such that $B(p,i)\ne B(q,i)$.
    There is a path $(a_p\to a^{B(p,i)}_i\to b^{B(q,i)}_i\to b_q\to b'_1\to b_p\to b''_p)$ and thus
    $$d(a_p,b_q)\le3,\ d(a_p,b'_1)\le4,\ d(a_p,b_p)\le5,\ d(a_p,b''_p)\le6.$$
    Since $b''_q$ is a neighbour of $b_q$, $d(a_p,b''_q)\le4$.
\end{proof}

\begin{proposition}
    \label{ecc_prop2}
    For $p\in[k]$, $d(a_p,b''_p)<6$ if and only if $x_p=y_p=1$.
\end{proposition}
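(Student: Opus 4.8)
The key observation is a degree count: $b''_p$ has degree one in $G$, its unique neighbour being $b_p$, since $b''_p$ occurs in the edge set only through $\bigcup_{p\in[k]}\{(b_p,b''_p)\}$. Hence $d(a_p,b''_p)=1+d(a_p,b_p)$, and the claim $d(a_p,b''_p)<6\iff x_p=y_p=1$ is equivalent to $d(a_p,b_p)\le4\iff x_p=y_p=1$. Since Proposition~\ref{ecc_prop1} already gives $d(a_p,b_p)\le5$ unconditionally, the plan is to prove the two halves separately: (i) if $x_p=y_p=1$ then $d(a_p,b_p)\le3$; (ii) if $(x_p,y_p)\ne(1,1)$ then $d(a_p,b_p)\ge5$ (which then forces $d(a_p,b_p)=5$ and $d(a_p,b''_p)=6$).

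Part (i) is immediate: when $x_p=y_p=1$ the edges $(a_p,a'_3)$ and $(b_p,b'_3)$ are both present, so together with the fixed edge $(a'_3,b'_3)$ they give the length-$3$ path $a_p\to a'_3\to b'_3\to b_p$. Part (ii) is the substantive part, and I would do it by ruling out walks of length $2$, $3$, and $4$ in turn. The structural fact driving everything is that every neighbour of $a_p$ lies in $V_a$, every neighbour of $b_p$ lies in $V_b$, and the only edges crossing between $V_a$ and $V_b$ are those in $E_c=\{(a'_3,b'_3)\}\cup\bigcup_{i\in[s]}\{(a^0_i,b^1_i),(a^1_i,b^0_i)\}$. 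Length $\le 2$ is impossible because $a_p$ and $b_p$ have no common neighbour. For length $3$, a path $a_p\to u\to v\to b_p$ must have $u\in V_a$, $v\in V_b$, so $(u,v)\in E_c$; for $(u,v)=(a^j_i,b^{1-j}_i)$ this forces $B(p,i)=j$ and $B(p,i)=1-j$, a contradiction, while for $(u,v)=(a'_3,b'_3)$ it forces exactly $x_p=1$ and $y_p=1$, which is excluded.

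For length $4$, write the path as $a_p\to v_1\to v_2\to v_3\to b_p$; since $v_1\in V_a$ and $v_3\in V_b$, exactly one of the two middle edges $v_1v_2,\,v_2v_3$ is a crossing edge in $E_c$. If $v_2v_3$ crosses, then $v_1$ is a common neighbour of $a_p$ and $v_2\in\{a'_3,a^0_i,a^1_i\}$; inspecting the adjacency list of each such ``hub'' vertex (neighbours of $a'_3$ are $a'_2$, $b'_3$ and the $a_q$ with $x_q=1$; neighbours of $a^j_i$ are $b^{1-j}_i$ and the $a_q$ with $B(q,i)=j$) shows none of them is a neighbour of $a_p$, so no such $v_1$ exists. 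The case where $v_1v_2$ crosses is the mirror image: $v_3$ would have to be a common neighbour of $b_p$ and $v_2\in\{b'_3,b^0_i,b^1_i\}$, and the analogous adjacency lists rule this out. Hence $d(a_p,b_p)\ge5$, completing (ii).

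The main obstacle is precisely the length-$4$ analysis: one must remember that the unique crossing edge can occupy either middle position, and then run the short but slightly tedious check that the hub vertices $a'_3, a^0_i, a^1_i$ (and their $b$-counterparts) share no neighbour with $a_p$ (resp.\ $b_p$). Everything else is bookkeeping against the decomposition $E=E_a\uplus E_b\uplus E_c$, together with Proposition~\ref{ecc_prop1}.
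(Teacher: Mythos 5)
Your proof is correct and follows essentially the same route as the paper's: reduce to $d(a_p,b_p)$ via the degree-one vertex $b''_p$, exhibit the length-$3$ path $a_p\to a'_3\to b'_3\to b_p$ when $x_p=y_p=1$, and otherwise show no path of length less than $5$ can cross the cut $E_c$. The only difference is bookkeeping: the paper cases on which crossing edge is used (leaning on the precomputed distances $d(a_p,a_q)=2$, $d(a_p,b_q)=3$, etc.), whereas you enumerate candidate path lengths $2$, $3$, $4$ directly; the two analyses amount to the same thing.
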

\begin{proof}
    It is equivalent to prove $d(a_p,b_p)<5$ if and only if $x_p=y_p=1$ because $b_p$ is the only neighbour of $b''_p$ and $d(a_p,b''_p)=d(a_p,b_p)+1$.
    If $x_p=y_p=1$, there is a path $(a_p\to a'_3\to b'_3\to b_p)$ and thus $d(a_p,b_p)\le3$.
    We now only need to prove the necessity.
    
    For $q\in[k]\setminus\{p\}$, obviously $d(a_p,a_q)=d(b_p,b_q)=2$.
    Furthermore, $d(a_p,b_q)=3$ because there is a path of length $3$ between $a_p$ and $b_q$, and they share no neighbour.
    $d(a_q,b_p)=3$ by symmetry.
    
    We use the graphical notation $u\leadsto v$ to denote any shortest path from $u$ to $v$.
    If edge $(a'_3,b'_3)$ is excluded from $E$, the shortest path between $a_p$ and $b_p$ will be of the form $(a_p\leadsto a_q\leadsto b_p)$ or $(a_p\leadsto b_q\leadsto b_p)$ for some $q\in[k]\setminus\{p\}$. Such paths are of length $5$.
    
    Now we consider paths between $a_p$ and $b_p$ where edge $(a'_3,b'_3)$ is involved.
    Such paths are of the form $(a_p\leadsto a'_3\to b'_3\leadsto b_p)$.
    We use the following fact:
    $d(a_p,a'_3)=3$ if $x_p=0$ and $d(a_p,a'_3)=1$ if $x_p=1$;
    $d(b_p,b'_3)=3$ if $y_p=0$ and $d(b_p,b'_3)=1$ if $y_p=1$.
    Thus, such paths have length less than $5$ if and only if $x_p=y_p=1$.
\end{proof}

A corollary of Proposition~\ref{ecc_prop1} and Proposition~\ref{ecc_prop2} tells that for any $p\in[k]$, $a_p$ has eccentricity less than $6$, i.e., $e(a_p)<6$, if $x_p=y_p=1$, and $e(a_p)=6$ otherwise.
So the number of $p\in[k]$ such that $e(a_p)<6$ is the same with the number of $p\in[k]$ with $x_p=y_p=1$.
And for any $\rho$, $\text{INT}_{k,\rho}(x,y)$ can be inferred by computing eccentricities.
We are ready to prove the quantum lower bound of computing exact eccentricities.

\begin{proof}[Proof of Theorem~\ref{ecc_bound}]
    Let $\mathcal A$ be any $r$-round quantum protocol which computes the exact eccentricity for each node.
    By the same argument mentioned in the proof of Theorem \ref{apsp_bound}, Alice and Bob simulate $\mathcal A$ on the instance $G=(V_a\uplus V_b, E_a\uplus E_b\uplus E_c)$ described above.
    After that, Alice knows $e(a_p)$ for $p\in[k]$.
    By counting the number of $p\in[k]$ satisfying $e(a_p)<6$, she can obtain $|x\cap y|$.
    Recall that $k=\Theta(n)$ and the set of edges between nodes simulated by Alice and those simulated by Bob, which is $E_c$, is of size $\Theta(s)=\Theta(\log k)=\Theta(\log n)$.
    The quantum communication complexity of $\text{INT}_{k,\rho}$ is $O(r\cdot\log^2n)$ for any $\rho$.
    Combining Lemma~\ref{int_bound}, $r=\Omega(\frac k{\log^2n})=\Omega(\frac n{\log^2n})$.
\end{proof}

\subsubsection{Approximate eccentricities}

The above argument can be extended to prove a more general quantum lower bound of approximating eccentricities by introducing a parameter $\ell$ and replacing some edges of instance $G=(V,E)$ with paths of length $\ell$.
More specifically, except for edges between $V_a$ and $V_b$, i.e., edges in $E_c$, and edges between $a'_1$ and additional nodes $a''_1,\cdots,a''_{n-n'}$, other edges (including the edges added according to $x$ and $y$) are all replaced with paths containing $\ell-1$ intermediate nodes.
There are $3k+2ks+4+|x|+|y|$ edges needed to be replaced.
Since the replacement may change the number of nodes, we redefine some parameters.

Let $n\ge31\ell-8$ be an integer where $\ell\ge1$ is an undetermined parameter.
We set $k$ to be the maximum integer satisfying
\begin{equation}\label{eqn:k}
  3k + 4(\lfloor\log_2(k-1)\rfloor+1) + 6
+   (\ell-1)(3k + 2k(\lfloor\log_2(k-1)\rfloor+1) + 4 + 2k)
\le n,    
\end{equation}
and set $s=\lfloor\log_2(k-1)\rfloor+1$, $n'=3k+4s+6+(\ell-1)(3k+2ks+4+|x|+|y|)$.
We again include $n-n'$ additional nodes to make sure $|V|=n$.
The instance $G$ is constructed in the same way except that intermediate nodes are added.

By following the argument in the proof of Proposition~\ref{ecc_prop1}, for $p\in[k]$, we have:
$$
\begin{array}{ll}
    \forall\ q\in[k], & d(a_p,a_q)\le2\ell; \\
    \forall\ v\in\{a'_1,a'_2,a'_3,b'_3\}, & d(a_p,v)\le3\ell+1; \\
    \forall\ i\in[n-n'], & d(a_p,a''_i)\le \ell+1; \\
    \forall\ i\in[s], v\in\{a^0_i,a^1_i,b^0_i,b^1_i\}, & d(a_p,v)\le3\ell+1; \\
    \forall\ q\in[k]\setminus\{p\}, & d(a_p,b_q)\le2\ell+1; \\
    \forall\ q\in[k]\setminus\{p\}, & d(a_p,b''_q)\le3\ell+1; \\
    & d(a_p,b'_1)\le3\ell+1.
\end{array}
$$
Moreover, $d(a_p,b'_2),d(a_p,b_p)\le d(a_p,b'_1)+1\le4\ell+1$ and $d(a_p,b''_p)=d(a_p,b_p)+1\le5\ell+1$.

By following the argument in proving Proposition~\ref{ecc_prop2}, we have:
$$
\begin{array}{ll}
    \text{if }x_p=y_p=1, & d(a_p,b'_2)=d(a_p,b_p)=2\ell+1,\ d(a_p,b''_p)=3\ell+1; \\
    \text{otherwise} & d(a_p,b''_p)=5\ell+1.
\end{array}
$$
Thus, $e(a_p)=3\ell+1$ if $x_p=y_p=1$ and $e(a_p)=5\ell+1$ otherwise.
We omit intermediate nodes because the eccentricity of $a_p$ cannot be achieved by any intermediate nodes. 
The ratio of the eccentricity of $a_p$ in the case $x_p=0$ or $y_p=0$ and the one in the case $x_p=y_p=1$ is $\frac{5\ell+1}{3\ell+1}=\frac{5}{3}-\varepsilon$ when $\ell$ is sufficiently large.  Thus, an algorithm of eccentricities  with an approximation ratio better than $\frac{5}{3}$ will induce a quantum communication protocol for the intersection function with inputs $x$ and $y$.

\begin{proof}[Proof of Theorem~\ref{ecc_approx_bound}]
    For any constant $0<\varepsilon<\frac23$ set $\ell=\lceil \frac{2}{9\varepsilon}\rceil$, let $\mathcal A$ be any $r$-round quantum protocol which computes a $(\frac53-\varepsilon)$-approximation of the eccentricities.
    After simulating $\mathcal A$ on the instance $G$ described above, Alice knows, for each $p\in[k]$, a $(\frac53-\varepsilon)$-approximation of the eccentricity of $a_p$, which is denoted by $\hat e(a_p)$.
    As $e(a_p)=3\ell+1$ if $x_p=y_p=1$ and $e(a_p)=5\ell+1$ otherwise, we have:
    $$
    \begin{array}{ll}
        \text{if }x_p=y_p=1, & \frac1{5/3-\varepsilon}\cdot(3\ell+1)\le\hat e(a_p)\le3\ell+1; \\
        \text{otherwise} & \frac1{5/3-\varepsilon}\cdot(5\ell+1)\le\hat e(a_p)\le5\ell+1.
    \end{array}
    $$
    Note that $3\ell+1<\frac1{5/3-\varepsilon}\cdot(5\ell+1)$ for the choice of $\ell$. Alice and Bob are able to output $\abs{x\cap y}$. Thus we obtain a quantum communication protocol for $\text{INT}_{k,\rho}$ with communication cost $O(rs\cdot\log n)$  for any $\rho>0$.
    Combining Lemma~\ref{int_bound}, $r=\Omega(\frac k{s\cdot\log n})=\Omega(\frac k{\log k\cdot\log n})$.
    From Eq.~\eqref{eqn:k} $k\geq \frac{c n}{\log n}$ for some absolute constant $c$. We conclude that $r=\Omega(\frac n{\log^3n})$.
\end{proof}

\section{Summary and Open Problems}

In this paper, we give nearly tight bounds on the complexity of the eccentricities and all pairs shortest paths in the quantum CONGEST model and prove that there is no quantum speedup for these two problems.  In contrast, the computation of diameter and radius have lower round complexity in the quantum CONGEST model than classical CONGEST model when the diameter is small~\cite{GallM18}. It is interesting that quantum communication plays different role for different distance parameters.

For now we only consider the round complexity of the distance parameters of the unweighted network in the quantum CONGEST models. Whether there is quantum speedup for diameter computation in weighted network is still an open problem.

\section*{Acknowledgment}
This work is supported by the National Key R\&D Program of China 2018YFB1003202, National Natural Science Foundation of China (Grant No. 61972191), the Program for Innovative Talents and Entrepreneur in Jiangsu and Anhui Initiative in Quantum Information Technologies Grant No. AHY150100.

\newpage
\bibliographystyle{alpha}
\bibliography{main}

\end{document}